\def\be#1\ee{\begin{align}#1\end{align}}
\def\bas{\begin{subequations}\begin{eqnarray}}
\def\eas{\end{eqnarray}\end{subequations}}
\def\nn{\nonumber}
\def\sgn{\mathrm{sgn}}
\def\eps{\varepsilon}
\def\lp{\ell_\text{Pl}}
\def\mb{\bar{\mu}}
\def\tr{\mathrm{tr}}
\def\de{\mathrm{d}}
\def\Pexp{\overrightarrow{\exp}}
\def\f{\frac}
\def\lb{\big\lbrace}
\def\rb{\big\rbrace}
\def\SU{\mathrm{SU}}
\def\i{\mathrm{i}}
\def\d{{}^{(d)}\!}
\def\su{\mathfrak{su}}
\def\q{\qquad}
\def\openone{\mathds{1}}
\newtheorem{proposition}{Proposition}
\begin{document}

\title{New Hamiltonians for loop quantum cosmology\\ with arbitrary spin representations}

\author{Jibril Ben Achour }
\affiliation{Center for Field Theory and Particle Physics,\\ Fudan University, 200433 Shanghai, China}

\author{Suddhasattwa Brahma}
\affiliation{Center for Field Theory and Particle Physics,\\ Fudan University, 200433 Shanghai, China}

\author{Marc Geiller}
\affiliation{Perimeter Institute for Theoretical Physics,\\ 31 Caroline Street North, Waterloo, Ontario, Canada N2L 2Y5}

\begin{abstract}	
In loop quantum cosmology, one has to make a choice of SU(2) irreducible representation in which to compute holonomies and regularize the curvature of the connection. The  systematic choice made in the literature is to work in the fundamental representation, and very little is known about the physics associated with higher spin labels. This constitutes an ambiguity whose understanding, we believe, is fundamental for connecting loop quantum cosmology to full theories of quantum gravity like loop quantum gravity, its spin foam formulation, or cosmological group field theory. We take a step in this direction by providing here a new closed formula for the Hamiltonian of flat FLRW models regularized in a representation of arbitrary spin. This expression is furthermore polynomial in the basic variables which correspond to well-defined operators in the quantum theory, takes into account the so-called inverse-volume corrections, and treats in a unified way two different regularization schemes for the curvature. After studying the effective classical dynamics corresponding to single and multiple spin Hamiltonians, we study the behavior of the critical density when the number of representations is increased, and the stability of the difference equations in the quantum theory.
\end{abstract}

\maketitle


\section{Introduction}

\noindent Loop quantum cosmology (LQC hereafter) \cite{B,AS}, a theory attempting to describe the quantum gravitational nature of the early Universe, presents several noteworthy features. First, it leads in a robust way to a resolution of the big-bang singularity in a wide variety of cosmological models \cite{APS,Ash,ACS,BP,PA,APSV,SKL,AWE1,AWE2,WE,BMWE}, and provides, away from the quantum gravity regime, corrections of order $\hbar$ to the standard (e.g. Friedmann) evolution equations \cite{Tav}. Second, it allows for a consistent extension of the framework of cosmological perturbation theory and inflationary physics all the way down to the Plank regime \cite{AAN1,AAN2,AAN3}, and has provided predictions which should be compared to experimental results in the near future\footnote{For an alternative way of treating cosmological perturbations in LQC, see \cite{CovLQC} for an overview and \cite{SigChangeLQC} for consequences.} \cite{AAN3,AG1,AM,Ag,AG2,AAG}. Finally, it has its roots in a full theory of quantum gravity, namely loop quantum gravity (LQG) \cite{lqg1,lqg2,lqg3}.

Although LQC was originally introduced as a minisuperspace quantization, where one imports by hand the key features and technical inputs coming from LQG (such as the existence of an area gap and the representation of the holonomy-flux algebra), lots of recent work has been devoted to identifying a cosmological sector within the full theory, i.e. to properly deriving LQC from (or embedding LQC into) LQG \cite{Bod1,Bod2,Bod3,AC1,AC2,BAC,AC3,AC4,AC5,AC6,AC7,ACM,BEHM,Engle}. In parallel to this, there have also been several advancements on conceptual and technical issues regarding the construction of the continuum limit of LQG and related discrete approaches to quantum gravity. This includes the coarse graining of spin foam models and spin networks \cite{DMBS,BD12b,timeevol,bdreview12,bd14,eckert,BiancaClement,Etera1,Etera2}, and the study of group field theory (GFT) renormalization \cite{C1,COR1,COR2,C2}. In particular, it has recently been shown within GFT that the effective dynamics of condensate states describing the collective behavior of microscopic geometrical degrees of freedom is that of a bouncing cosmological model \cite{GiSi,GOS1,GOS2,OSWE,G1,G2,GO}. GFT being a second quantization of LQG \cite{O1,O2}, this results evidently begs the question of the relationship between LQC and the cosmological sector of GFT.

In any attempt to embed LQC within LQG, or to connect it to GFT, one should understand the role played by possible quantization and regularization ambiguities. In LQC, one such ambiguity appears due to the necessity of choosing a representation of $\SU(2)$, labelled by a half-integer spin $j$, in which to express group elements such as the holonomies, and in which to evaluate the traces defining various quantities like the non-local curvature of the connection expressed in terms these holonomies. The usual choice made in the literature is to work in the fundamental representation of spin $j=1/2$. Similarly, states in LQC are always chosen to be as fine grained as possible, i.e. as being made up of a collection of quanta of geometry each carrying an excitation of spin $j=1/2$ (i.e. the smallest possible area or volume). This is in sharp contrast with the situation in the full theory, where the dynamics (i.e. the Hamiltonian constraint or the spin foam amplitudes) and the states both depend on spin labels. In the study of quantum cosmology via GFT condensates on the other hand, it has been shown that the dynamics selects a low spin regime, and that the LQC states with spin $j=1/2$ are dominant while higher spin contributions are exponentially suppressed \cite{G1}.

The issue of the spin ambiguity in the regularization of the Hamiltonian constraint of LQC has previously been addressed in \cite{V,CL} (and also in \cite{P} in the context of three-dimensional gravity), with differences which we will explain below. In these articles, the formulas obtained by the authors are rather involved. To understand the reason for this, let us recall that the regularization of the Hamiltonian constraint typically requires to write the curvature of the connection in terms of holonomies of this latter around a square plaquette, and then to shrink the area enclosed by this plaquette to a non-zero value given by the area gap of LQG. The existence of this area gap (i.e. the impossibility of taking the limit of vanishing area) is the reason for which the resulting expression for the curvature is non-local and depends on the representation label $j$ in which the holonomies and the trace are computed. In \cite{V,CL}, this is done by computing explicitly the trace in a representation of spin $j$ of group elements expressed in this same representation, which gives rise to the above-mentioned complicated formulas. However, the computation in a representation of spin $j$ of the trace of an $\SU(2)$ group element does not require the ``full information'' about this latter (i.e. the explicit knowledge of all its matrix elements), and can be reduced to an expression involving only the trace in the representation of spin $j=1/2$ (i.e. the fundamental representation) and the class angle. As pointed out in \cite{BAGN}, this simple fact can be used to drastically simplify the expression for the regularized Hamiltonian constraint in LQC with an arbitrary spin representation. This is the main technical feature which we exploit in this note.

In addition to this technical simplification, the improvement which we obtain over \cite{V,CL} also consists in writing in a unified fashion the Hamiltonian constraint operator using the two different regularization schemes which are available in FLRW models for the curvature of the connection, using respectively the holonomies or the connection itself. Following \cite{CK}, we refer to these regularizations as the holonomy and curvature regularizations, and denote them by HR and CR respectively. Also, our construction takes into account the so-called inverse-volume corrections. These results are summarized in formula \eqref{final formula}, which is the expression for the total quantum Hamiltonian constraint operator (with a massless scalar field) taking into account the inverse-volume corrections, the two possible choices HR and CR of regularization for the curvature, and an arbitrary choice of $\SU(2)$ representation.

One of the consequences of considering a representation of spin $j>1/2$ is the obtention of a difference evolution equation for the quantum states which is of higher order in the step size (as compared to the case $j=1/2)$. In \cite{V}, a stability analysis (following the criteria of \cite{BD}) of this higher order difference equation has been performed in the case $j=1$ and HR, and has shown that the quantum theory admits spurious solutions which could spoil its semi-classical limit\footnote{It has also been suggested that these solutions could have zero or infinite norm, and therefore be excluded from the physical Hilbert space. Formula \eqref{final formula} could be taken as a starting point for the construction of this Hilbert space, but we leave this for future work.}. In the present note, we are able to show that the choices $j>1/2$ also fail this stability criterion for the regularization HR, while for the regularization CR only the choices $j>1$ lead to spurious solutions. In addition, the explicit expression for the regularized Hamiltonian constraint and the corresponding classical effective equations of motion show that the regularizations HR and CR do not match for $j>1/2$ (while for $j=1/2$ they can be made to agree by rescaling the parameter $\mb$ entering the regularization).

Interestingly, when considering a representation $j>1/2$ some peculiarities already appear at the classical level. Indeed, following what is usually done in the case $j=1/2$, it is possible to view the regularized Hamiltonian constraint as a heuristic\footnote{For reasons which will becomes clear, we refer to this effective classical Hamiltonian as ``heuristic'' since it is not derived from a rigorous semi-classical analysis of the quantum theory but simply obtained from a regularization of the classical constraint.} effective classical Hamiltonian incorporating holonomy corrections, and then derive effective evolution equations for various physical quantities (such as the Hubble parameter, the energy density, or the volume). By computing the evolution of the matter density as a function of the Hubble parameter, we find that for $(j>1/2,\text{HR})$ and $(j>1,\text{CR})$ the density becomes negative during its evolution. This means that the corresponding heuristic Hamiltonian constraints do not describe proper semi-classical FLRW dynamics. At first, one could think that this is the manifestation at the classical level of the fact that there exist spurious unstable solutions in the quantum theory. However, the analysis of extended Hamiltonian constraints which take into account the presence of multiple spins shows that this relation is not straightforward. As we will show, it is possible to consider arbitrary linear combinations of Hamiltonian constraints which are regularized with different $\SU(2)$ representations. If one sums an arbitrary number of successive Hamiltonians with spins starting from $j=1/2$, for a very generic class of weights the resulting evolution equations lead to a bouncing FLRW behavior. Interestingly, if the weights assigned to higher spins are not decreasing fast enough, the critical density at which the bounce occurs tends to zero. On the other hand, for weights which are decreasing sufficiently fast (typically exponentially), as the number of allowed spins is increased the critical density converges towards a value which is always lower than the usual $j=1/2$ value $\rho_\text{c}\approx0.41\rho_\text{Pl}$. However, in spite of the density remaining positive as a function of the Hubble rate when one considers the contribution of several spins, the corresponding quantum Hamiltonian constraint still admits spurious solutions. This shows that there is indeed no a priori relation between the peculiar behavior of the heuristic classical equations and the existence of spurious solutions in the quantum theory. It is interesting to note however that the relationship between the lowering of the critical density and the inclusion of exponentially suppressed spins $j>1/2$ is reminiscent of what has been observed in cosmological applications of GFT \cite{G1}.

This note is organized as follows. In section \ref{sec:2}, we briefly review the construction of flat FLRW LQC and its effective dynamics in the case $j=1/2$. This is the occasion of introducing the various notations used throughout the rest of this work, and will serve as a comparison with the case of higher representation labels. In section \ref{sec:3}, we construct the two different regularizations HR and CR of the curvature operator for higher spins. In section \ref{sec:4}, we study the heuristic classical equations of motion which are obtained by considering the regularized Hamiltonian as a classical constraint. In section \ref{sec:5} we incorporate the inverse-volume corrections, and finally in section \ref{sec:6} we briefly comment on the quantum theory.

\section{Review of flat homogeneous isotropic LQC}
\label{sec:2}

\noindent In order to make this article self-contained and to set some standard notations and conventions, we briefly review in this section some well-known features about flat, homogeneous, and isotropic LQC. The reader already familiar with LQC can safely skip this part and jump to section \ref{sec:3}.

In the $\SU(2)$ Ashtekar--Barbero formulation of first order gravity, the action takes the $3+1$ Hamiltonian form
\be
S[E^a_i,A^i_a,N,N^a,A^i_0]=\int_\mathbb{R}\de t\int_\Sigma\de^3x\,\left(\f{1}{\kappa\gamma}E^a_i\partial_0{A}^i_a-NC_\text{g}-N^aC_a-A^i_0G_i\right).
\ee
Here, and in the rest is this work, we use $a,b,c\dots$ to denote spatial indices and $i,j,k,\dots$ to denote $\su(2)$ Lie algebra indices. The canonical phase space variables are the connection and the densitized triad, whose explicit expression is given respectively by
\be
A^i_a\coloneqq\Gamma^i_a+\gamma K^i_a,\q\q E^a_i\coloneqq\det(e^i_a)e^a_i=\f{1}{2}\eps^{abc}\eps_{ijk}e^j_be^k_c,
\ee
and these variables obey the Poisson bracket relation
\be\label{AE Poisson}
\lb A^i_a(x),E^b_j(y)\rb=\kappa\gamma\delta^i_j\delta^b_a\delta^3(x,y).
\ee
Here $\Gamma^i_a$ is the Levi--Civita spin connection, $K^i_a$ is the extrinsic curvature, $\kappa\coloneqq8\pi G$, and $\gamma$ is the Barbero--Immirzi parameter. As usual, the total Hamiltonian is the sum of the scalar, vector, and Gauss constraints, which are enforced respectively by the lapse $N$, the shift vector $N^a$, and the connection component $A^i_0$. Upon imposition of the cosmological symmetries (i.e. homogeneity and isotropy), only the scalar constraint will be non-vanishing. For our purposes, it is therefore sufficient to only focus on the explicit expression of the (spatially-integrated) scalar constraint, which is given by\footnote{The subscript refers here to the gravitational part of the scalar constraint, and later on we will consider the addition of a matter part.}
\be\label{full scalar constraint}
\mathcal{C}_\text{g}^N\coloneqq\int_\Sigma\de^3x\,NC_\text{g}=\f{1}{2\kappa}\int_\Sigma\de^3x\,Nq^{-1/2}E^a_iE^b_j\left({\eps^{ij}}_kF^k_{ab}-2\left(1+\gamma^2\right)K^i_{[a}K^j_{b]}\right),
\ee
where $q^{1/2}\coloneqq|\det(q_{ab})|^{1/2}=|\det(E^a_i)|^{1/2}=|\det(e^i_a)|$ is the square root of the determinant of the spatial metric. Since we are going to work later on with different choices for the lapse $N$, we have indicated in $\mathcal{C}_\text{g}^N$ an explicit dependency on $N$ in order to keep track of which choice has been made.

\subsection{Symmetry reduction}

\noindent We are now going to restrict ourselves to a spatially flat, homogeneous, and isotropic spacetime, i.e. to a flat FLRW cosmological model. In this case, the line element in comoving coordinates $x^a=(x,y,z)$ takes the form
\be
\de s^2=-\de t^2+q_{ab}\de x^a\de x^b=-\de t^2+a^2(\de x^2+\de y^2+\de z^2)=-\de t^2+a^2\mathring{q}_{ab}\de x^a\de x^b,
\ee
where $a=a(t)$ is the scale factor of the Universe and $t$ is the proper time along the worldlines of observers moving in a direction orthogonal to that of the spatial slices. In this expression, $q_{ab}$ is the physical metric of the homogeneous spatial slices $\Sigma$, and for later purposes we have used the comoving coordinates to define a fiducial flat background metric
\be
\mathring{q}_{ab}\de x^a\de x^b=\de x^2+\de y^2+\de z^2,\q\q\mathring{q}_{ab}=\mathring{e}^i_a\mathring{e}^j_b\delta_{ij},
\ee
where $\mathring{e}^i_a$ is a fixed fiducial cotriad. The physical spatial three-dimensional metric is related to this fiducial metric through the scale factor by $q_{ab}=a^2\mathring{q}_{ab}$. As we shall see later on, it turns out to be useful in LQC to work with the harmonic time coordinate $\tau$, which is such that $\Box\tau=0$, and in terms of which the line element takes the form
\be
\de s^2=-a^6\de\tau^2+q_{ab}\de x^a\de x^b.
\ee
This choice corresponds in fact to setting the lapse, defined via $N\de\tau=\de t$, to $N=a^3$.

In flat FLRW cosmology, the topology of the spatial manifold $\Sigma$ can be chosen to be either that of a three-torus, or that of $\mathbb{R}^3$. Let us focus here on the case $\Sigma=\mathbb{R}^3$. Since the spatial manifold is then non-compact, all spatial integrals are divergent, and the Lagrangian, the Hamiltonian, and the symplectic structure are a priori ill-defined. This can easily be dealt with by restricting all spatial integrations to a fixed fiducial cell $\mathcal{V}$, which we can choose to be cubical with respect to every physical metric $q_{ab}$ on $\mathbb{R}^3$. More precisely, using the cell $\mathcal{V}$ as a regulator for the spatial integrals means that we have to proceed to the substitution
\be
\int_\Sigma\de^3x\rightarrow\int_\mathcal{V}\de^3x.
\ee
We will denote the fiducial volume of this cell, defined as the volume measured with respect to the fiducial metric $\mathring{q}_{ab}$, by
\be
\mathring{V}\coloneqq\int_\mathcal{V}\de^3x\,\mathring{q}^{1/2},
\ee
where $\mathring{q}^{1/2}\coloneqq|\det(\mathring{q}_{ab})|^{1/2}$. Note that there is always a freedom in rescaling the fiducial metric and the scale factor as $(\mathring{q}_{ab},a)\rightarrow(\alpha^2\mathring{q}_{ab},\alpha^{-1}a)$, as well a freedom in rescaling the cell as $\mathcal{V}\rightarrow\beta^3\mathcal{V}$. Since this freedom cannot be gauged away (unlike in the case of positive or negative spatial curvature, where one can choose a unit metric on $\Sigma$), one has to remember that the scale factor $a$ has no direct physical meaning. Only ratios of scale factors do carry a physical interpretation.

Now that we have appropriately regularized the infinite volume of the spatial manifold, we are ready to see how the cosmological symmetries affect the phase space variables of the theory. Isotropic connections and densitized triads are characterized by
\be
A^i_a=\tilde{c}\mathring{e}^i_a\rightarrow c\mathring{V}^{-1/3}\mathring{e}^i_a,\q\q E^a_i=\tilde{p}\mathring{q}^{1/2}\mathring{e}^a_i\rightarrow p\mathring{q}^{1/2}\mathring{V}^{-2/3}\mathring{e}^a_i,
\ee
where $\mathring{e}^a_i$ the fiducial triad adapted to the edges of $\mathcal{V}$. Here we have used the freedom in redefining the fiducial metric in order to rescale the symmetry-reduced variables as $(\tilde{c},\tilde{p})\rightarrow(c\mathring{V}^{-1/3},p\mathring{V}^{-2/3})$. This ensures that the resulting symmetry-reduced Poisson bracket will be independent of $\mathring{q}_{ab}$ and $\mathcal{V}$. With these new cosmological variables, the Poisson bracket \eqref{AE Poisson} reduces to
\be
\lb c,p\rb=\f{\kappa\gamma}{3}.
\ee
As a remark, let us point out that although the symplectic structure and the scalar constraint will turn out to be independent of the choice of fiducial metric $\mathring{q}_{ab}$, they will still both carry a hidden dependency on the choice of fiducial cell.

The dynamics of LQC is usually formulated in terms of the above symmetry-reduced connection and triad variables. However, it will also be helpful later on to use the variables
\be\label{b and v variables}
b\coloneqq c|p|^{-1/2},\q\q\nu\coloneqq\f{\sgn(p)}{2\pi\gamma\lp^2}|p|^{3/2},
\ee
which satisfy the Poisson bracket
\be
\lb b,\nu\rb=\f{2}{\hbar}.
\ee
It is also informative to relate these variables to the scale factor. The relation between the scale factor and the above variables is as follows\footnote{Note that since the relations with $\dot{a}$ involve the dynamics, i.e. the time evolution of $a$, these relations are only true in the classical Hamiltonian framework, and simply express the fact that we are free to work with the phase space variables $(a,\dot{a})$, $(c,p)$ or $(b,\nu)$. As we will see later on, these relations are modified in LQC (whether it is in the full quantum theory or at the level of the effective classical equations) since the dynamics is then different.}:
\be
c=\gamma\dot{a}\mathring{V}^{1/3},\q\q|p|=a^2\mathring{V}^{2/3},\q\q b=\gamma\f{\dot{a}}{a},
\ee
where the dot indicates the derivative with respect to proper (or cosmic) time $t$. We therefore see that $p$ measures the physical area of each rectangular face of the cell $\mathcal{V}$, while $b$ is proportional to the Hubble rate. Note that we also have the relations
\be
q^{1/2}=a^3\mathring{q}^{1/2}=|p|^{3/2}\mathring{V}^{-1}\mathring{q}^{1/2},
\ee
and that we can therefore express the physical volume of $\mathcal{V}$ as
\be
V=a^3\mathring{V}=\int_\mathcal{V}\de^3x\,q^{1/2}=|p|^{3/2}.
\ee
Finally, note that we can express the physical triad and cotriad as
\be
e^a_i=\sgn(p)|p|^{-1/2}\mathring{V}^{1/3}\mathring{e}^a_i,\q\q e^i_a=\sgn(p)|p|^{1/2}\mathring{V}^{-1/3}\mathring{e}^i_a.
\ee
The variable $p$ can be either positive or negative, which corresponds to the two possible choices of orientation for the triad. Since the metric is left unchanged under the transformation $e^a_i\mapsto-e^a_i$ and we will not consider the coupling to fermions, the physics will also be left unchanged by this transformation. Thus, this orientation reversal of the triad represents a gauge transformation which has to be dealt with in the quantum theory.

We can now focus on the scalar constraint of the theory. Because the model that we describe is spatially flat, the spin connection $\Gamma^i_a$ vanishes and we have $A^i_a=\gamma K^i_a$. Furthermore, because of spatial homogeneity, the derivatives of $K^i_a$ do also vanish and we have that
\be
2K^i_{[a}K^j_{b]}=\frac{1}{\gamma^2}{\eps^{ij}}_kF^k_{ab}.
\ee
With this, we find that the gravitational part \eqref{full scalar constraint} of the scalar constraint can be expressed as
\be\label{reduced gravitational hamiltonian}
\mathcal{C}_\text{g}^N=-\f{1}{2\kappa\gamma^2}N|p|^{1/2}\mathring{V}^{2/3}\mathring{e}^a_i\mathring{e}^b_j{\eps^{ij}}_kF^k_{ab}=-\f{3}{\kappa\gamma^2}N|p|^{1/2}c^2,
\ee
where we have written the intermediate expression for later use, as well as an explicit dependency on the lapse. The simplest type of matter coupling which we could consider is the addition of a free massless scalar field. The contribution of such a matter sector to the scalar constraint is given by
\be\label{matter hamiltonian}
\mathcal{C}_\text{m}^N=\f{1}{2}\int_\mathcal{V}\de^3x\,Nq^{1/2}\dot{\phi}^2=\f{1}{2}N|p|^{3/2}\dot{\phi}^2=\f{1}{2}N|p|^{-3/2}p_\phi^2,
\ee
where the momentum conjugate to the scalar field is $p_\phi=|p|^{3/2}\dot{\phi}$, and we have $\lb\phi,p_\phi\rb=1$. In summary, the total scalar constraint (which is in fact the total Hamiltonian of the symmetry-reduced cosmological model) takes the form
\be
\mathcal{C}^N\coloneqq\mathcal{C}_\text{g}^N+\mathcal{C}_\text{m}^N=N\left(-\f{3}{\kappa\gamma^2}|p|^{1/2}c^2+\f{1}{2}|p|^{-3/2}p_\phi^2\right),
\ee
In order to go back to more conventional variables, we can reintroduce the scale factor along with its momentum $p_a=-a\dot{a}$, and use $p_\phi=V\dot{\phi}$ with $V=a^3\mathring{V}=|p|^{3/2}$. This leads to
\be
\mathcal{C}^N=N\left(-\f{3}{\kappa}\f{Vp_a^2}{a^4}+\f{1}{2}\f{p_\phi^2}{V}\right).
\ee
When the lapse is chosen to be $N=1$, the evolution in proper time is given by $\partial_t(\cdot)=\lb\cdot,\mathcal{C}^1\rb$, and one can check that the equation $\dot{p}_a=\lb p_a,\mathcal{C}^1\rb$, together with the vanishing of the Hamiltonian constraint, leads to the Friedmann equation
\be
-\f{1}{2}\f{\ddot{a}}{a}=\left(\f{\dot{a}}{a}\right)^2=\f{\kappa}{3}\rho,
\ee
where
\be\label{classical density}
\rho\coloneqq\f{1}{2}|p|^{-3}p_\phi^2=\f{1}{2}\dot{\phi}^2
\ee
is the energy density of the scalar field.

So far, we have defined the Hamiltonian dynamics of the symmetry-reduced variables. However, in LQG and LQC the connection cannot be associated to a well-defined operator in the quantum theory, and one has to rewrite the constraint in terms of the holonomies of the connection, which are the only well-defined operators. In particular, one has to proceed to a holonomy regularization of the curvature operator $F^i_{ab}$. This is precisely the step which depends on the choice of spin representation.

\subsection{Regularization of the curvature in the spin 1/2 representation}

\noindent Recasting the Hamiltonian constraint in a form which is suited for quantization via the techniques of LQG requires essentially two steps. The first one is to regularize the curvature of the connection in terms of the holonomies of this latter, and the second one is to appropriately deal with the presence of an inverse power of the determinant of the triad. The first step leads to the so-called holonomy corrections, while the second one, when carried out using Thiemann's trick, leads to the so-called inverse-volume corrections. It is however known that the quantum gravity effects which are responsible for the resolution of the singularity in LQC can ``mostly'' be traced back to the holonomy corrections, while the inverse-volume corrections only play an auxiliary role. Therefore, as a simplification which will however enable us to encode the relevant physics, let us get rid of the inverse determinant of the triad in the Hamiltonian constraint by simply going to the harmonic gauge, i.e. by choosing\footnote{Since $p$ represents the physical area of the faces of the cell $\mathcal{V}$, it is independent of the coordinates and transforms as a scalar. On the other hand, $q^{1/2}=|p|^{3/2}\mathring{V}^{-1}\mathring{q}^{1/2}$ is a scalar density of weight one because of the coordinate-dependent factors. Since $N$ is itself a scalar, this explains why in LQC it is possible and meaningful to choose $N=|p|^{3/2}$, while in the full theory it is not possible to choose $N=q^{1/2}$.} the lapse to be $N=|p|^{3/2}$.

In the harmonic gauge, which we denote by using the superscript ``h'' in the scalar constraint, the gravitational part \eqref{reduced gravitational hamiltonian} of the Hamiltonian constraint becomes
\be\label{harmonic gauge gravitational hamiltonian}
\mathcal{C}_\text{g}^\text{h}=-\frac{1}{2\kappa\gamma^2}|p|^2\mathring{V}^{2/3}\mathring{e}^a_i\mathring{e}^b_j{\eps^{ij}}_kF^k_{ab}.
\ee
The first step towards regularizing the curvature of the connection is to compute the holonomies of this latter. For this, let us consider an edge aligned with the direction of $\mathring{e}^a_i\partial_a$, and of fiducial length $\mb\mathring{V}^{1/3}$. In the fundamental representation of dimension $d=2$ (i.e. of spin $j=1/2$), the holonomy of the connection along this edge is given by
\be
h_{i^\epsilon}^{(\mb)}=\Pexp\int_0^{\mb\mathring{V}^{1/3}}\epsilon A^i_a\tau_i\de x^a=\exp(\epsilon\mb c\tau_i)=\cos\left(\f{\mb c}{2}\right)\openone+2\epsilon\sin\left(\f{\mb c}{2}\right)\tau_i,
\ee
where $\epsilon=\pm1$ enables us to choose between the holonomy along the edge $i$ and its inverse, and $\tau_i$ is a generator of $\su(2)$ (see appendix \ref{appendixA}). In what follows, we will denote by $h_{i^{-1}}^{(\mb)}$ the inverse of the group element $h_i^{(\mb)}$. Now, we can introduce the holonomy of the connection around a square plaquette $\Box_{ij}$ perpendicular to $\mathring{e}^a_k\partial_a$ and of fiducial area $\text{Ar}_\Box=\mb^2\mathring{V}^{2/3}$. This is given by
\be
h_{\Box_{ij}}^{(\mb)}=h_i^{(\mb)}h_j^{(\mb)}h_{i^{-1}}^{(\mb)}h_{j^{-1}}^{(\mb)}.
\ee
In terms of this plaquette holonomy, we can then write the following standard expression for the curvature:
\be
F^k_{ab}
&=\f{1}{\tau(2)}\lim_{\text{Ar}_\Box\rightarrow0}\f{1}{\text{Ar}_\Box}\tr\left(\left[h_{\Box_{ij}}^{(\mb)}-\openone\right]\tau^k\right)\mathring{e}^i_a\mathring{e}^j_b\nn\\
&=\f{1}{\tau(2)}\lim_{\mb\rightarrow0}\f{1}{\mb^2\mathring{V}^{2/3}}\tr\left(h_{\Box_{ij}}^{(\mb)}\tau^k\right)\mathring{e}^i_a\mathring{e}^j_b,
\ee
where the trace is taken in the representation of dimension $d=2$ with the normalization factor \eqref{trace normalization}. In the second line, we have simply used the fact that, by definition, the generators of the Lie algebra $\su(2)$ have vanishing trace, and used the expression for the area of the plaquette in terms of $\mb$ and $\mathring{V}$.

At this point, the crucial input from the full theory is to realize that this limit is in fact not well-defined. Physically speaking, this is due to the presence of a minimal non-zero eigenvalue for the area operator, which implies that one can only shrink $\text{Ar}_\Box$ down to the so-called area gap of LQG (which, as explained in the next subsection, is proportional to $\lp^2$). In mathematical terms, this non-existence of the limit is due to the fact that the basic phase space functions, namely the holonomies of the connection $c$, are not weakly-continuous with respect to the parameter $\mb$ measuring the edge length. At this point, we are therefore going to approximate the curvature by
\be
F^k_{ab}\approx-\f{2}{\mb^2\mathring{V}^{2/3}}\tr\left(h_{\Box_{ij}}^{(\mb)}\tau^k\right)\mathring{e}^i_a\mathring{e}^j_b,
\ee
and the explicit relation between the parameter $\mb$ and the area gap will be given later on. Now, using the explicit expression
\be\label{plaquette holo}
h_{\Box_{ij}}^{(\mb)}=\left(\cos(\mb c)+\f{1}{2}\sin^2(\mb c)\right)\openone+\sin^2(\mb c){\eps_{ij}}^k\tau_k+2\sin^2\left(\f{\mb c}{2}\right)\sin(\mb c)(\tau_i-\tau_j),
\ee
one finds
\be
\tr\left(h_{\Box_{ij}}^{(\mb)}\tau^k\right)=-\f{1}{2}\sin^2(\mb c){\eps_{ij}}^k,
\ee
and the scalar constraint \eqref{harmonic gauge gravitational hamiltonian} in the harmonic gauge finally becomes\footnote{We now write strict equalities instead of approximate ones.}
\be\label{regularized constraint in d=2}
{}^{(2)}\mathcal{C}_\text{g}^\text{h}=-\f{3}{\kappa\gamma^2\mb^2}|p|^2\sin^2(\mb c).
\ee
Here we have added a superscript $(2)$ in order to remember that this expression has been obtained by regularizing the curvature operator in the representation of dimension $d=2$. This expression is the (simplest) regularized version of the Hamiltonian constraint which is used in LQC. It is written in terms of quantities which are all well-defined on the kinematical Hilbert space of the theory, and, modulo a choice of factor ordering, one can therefore go on to the study of the physical states and the quantum dynamics.

Now, one last subtlety is to determine an explicit expression for the parameter $\mb$. Again, this is done by importing ingredients from the full theory. Consider one of the square faces defining the boundary of the cubical cell $\mathcal{V}$. Assuming that this face is pierced transversally by $N$ spin network links carrying the minimal possible spin label, i.e. $j=1/2$ (which corresponds to the best possible coarse-grained homogeneity), one can subdivide the face into $N$ identical cells which are each pierced by a single spin network link. Each of these elementary cells carries an area determined by the area gap of LQG, namely
\be\label{lambda}
\lambda^2\coloneqq4\sqrt{3}\pi\gamma\lp^2,
\ee
which is the smallest non-vanishing eigenvalue of the area spectrum
\be
A(j)=\kappa\gamma\hbar\sqrt{j(j+1)}=8\pi\gamma\lp^2\sqrt{j(j+1)}=4\pi\gamma\lp^2\sqrt{d^2-1}.
\ee
Therefore, the face has a total physical area given by
\be
|p|=N\lambda^2.
\ee
On the other hand, recall that the parameter measuring the length of the edges of each elementary cell is $\mb$. In terms of the fiducial metric, each cell has an area $\mb^2\mathring{V}^{2/3}$. Since the total fiducial area of the face is $\mathring{V}^{2/3}$, we get the equality
\be
N\mb^2\mathring{V}^{2/3}=\mathring{V}^{2/3}.
\ee
Putting this together with the previous equation we finally obtain the relation
\be
\mb^2|p|=\lambda^2,
\ee
or in other words $\mb=\lambda|p|^{-1/2}$. This shows that $\mb$ contains in fact a non-trivial dependency on the dynamical variables.

\subsection{Effective dynamics}
\label{effective dynamics}

\noindent The effective framework of LQC is a classical Hamiltonian dynamics which encodes the leading order contributions from the quantum gravity effects. Rigorously speaking, this effective dynamics has to be extracted from the quantum dynamics by taking an appropriate semi-classical limit and using coherent states. One simpler and heuristic way to proceed is simply to consider the expression \eqref{regularized constraint in d=2} coming from the regularization of the curvature instead of the original scalar constraint\footnote{Note that \eqref{regularized constraint in d=2} can be obtained from \eqref{reduced gravitational hamiltonian} by the so-called ``polymerisation'' of the connection variable, which consists in replacing $c\rightarrow\sin(\mb c)/\mb$. This fact is, however, a particularity of flat FLRW models, and does not hold in general (we thank Edward Wilson-Ewing for bringing this to our attention).} \eqref{reduced gravitational hamiltonian}.

In order to derive the effective Friedmann equation, let us work with proper time\footnote{Since we are here only considering the heuristic modification of the constraint including holonomies, we allow ourselves to freely change between the harmonic gauge and the proper time gauge. In the quantum theory, this is of course not allowed, since working with proper time requires the inclusion of the inverse-volume corrections.}, i.e. with a lapse $N=1$, and consider the heuristic modification of the Hamiltonian constraint given by
\be
{}^{(2)}\mathcal{C}^1={}^{(2)}\mathcal{C}^1_\text{g}+\mathcal{C}^1_\text{m}=-\f{3}{\kappa\gamma^2\mb^2}|p|^{1/2}\sin^2(\mb c)+|p|^{3/2}\rho.
\ee
One can then compute the equation of motion
\be
\dot{p}=\lb p,{}^{(2)}\mathcal{C}^1\rb=\f{2}{\gamma\mb}|p|^{1/2}\sin(\mb c)\cos(\mb c),
\ee
which in turn leads to
\be
\left(\f{\dot{a}}{a}\right)^2=\f{1}{4}\left(\f{\dot{p}}{p}\right)^2=\f{1}{\gamma^2\mb^2}|p|^{-1}\sin^2(\mb c)\cos^2(\mb c)=\f{1}{\gamma^2\mb^2}|p|^{-1}\sin^2(\mb c)\big(1-\sin^2(\mb c)\big).
\ee
Finally, using the vanishing of the effective Hamiltonian constraint, we get the modified Friedmann equation
\be
H^2=\left(\f{\dot{a}}{a}\right)^2=\f{\kappa}{3}\rho\left(1-\f{\rho}{\rho_\text{c}}\right),
\ee
where the critical density is given by
\be\label{critical density}
\rho_\text{c}\coloneqq\f{3}{\kappa\gamma^2\mb^2}|p|^{-1}=\f{3}{\kappa\gamma^2\lambda^2}=\f{\sqrt{3}}{4\pi\kappa\gamma^3\lp^2}=\f{\sqrt{3}}{32\pi^2\gamma^3}\rho_\text{Pl},
\ee
and the Planck density is $\rho_\text{Pl}=(\hbar G^2)^{-1}$ in units in which the speed of light is 1. Using the value $\gamma\approx0.2375$ for the Barbero--Immirzi parameter\footnote{This value, which is often used in LQC and for example in recent phenomenological investigations \cite{AG2}, comes from the requirement that the black hole state counting in LQG leads to the celebrated Bekenstein--Hawking formula $S=A/(4\lp^2)$ \cite{BH1,BH2,BH3,BH4,BH5,EPN,ENPP,PP,LT1,LT2}. Although there is some technical freedom in performing this calculation, which can lead to different values of $\gamma$, one can however always expect $\gamma$ to be of order one.}, one finds a value for the critical density which is $\rho_\text{c}\approx0.41\rho_\text{Pl}$.

\section{Regularizations of the curvature operator}
\label{sec:3}

\noindent In this section, we present the two regularizations of the curvature operator for arbitrary $\SU(2)$ representations. Following \cite{CK}, we denote these two regularization schemes by HR (for holonomy regularization) and CR (for connection regularization). As we will see, the resulting regularized curvature operators in these two cases can be written in the exact same form, and the information about the regularization scheme is simply contained in the class angle of a certain $\SU(2)$ group element.

\subsection{Holonomy regularization}

\noindent The holonomy regularization is the one which we have used in the previous section in the case $d=2$. We explain here how the simple formula of appendix \ref{appendixA} can be used to generalize this result to the case $d>2$. To compute the curvature, one considers again a square plaquette $\Box_{ij}$ perpendicular to the direction of $\mathring{e}^a_k\partial_a$, with edges of coordinate length $\mb\mathring{V}^{1/3}$, and of area $\text{Ar}_\Box=\mb^2\mathring{V}^{2/3}$. Then we can write that\footnote{From now on we will simply remove the limit from various expressions, and still use an equal sign.}
\be\label{curvature holo}
\d F^k_{ab}
&=\f{1}{\tau(d)}\lim_{\text{Ar}_\Box\rightarrow0}\f{1}{\text{Ar}_\Box}\tr_{(d)}\left(\left[\d h^{(\mb)}_{\Box_{ij}}-\openone\right]\d\tau^k\right)\mathring{e}^i_a\mathring{e}^j_b\nn\\
&=\f{1}{\tau(d)}\f{1}{\mb^2\mathring{V}^{2/3}}\tr_{(d)}\left(\d h^{(\mb)}_{\Box_{ij}}\d\tau^k\right)\mathring{e}^i_a\mathring{e}^j_b,
\ee
with
\be
\d h^{(\mb)}_{\Box_{ij}}=\d h_i^{(\mb)}\d h_j^{(\mb)}\d h_{i^{-1}}^{(\mb)}\d h_{j^{-1}}^{(\mb)}.
\ee
Since this plaquette holonomy is obviously a group element, one can rewrite the trace in \eqref{curvature holo} using the strategy explained in appendix \ref{appendixA} and summarized in formula \eqref{trace formula}.

For this, we first need to compute the trace in the fundamental representation. This has already been done above using \eqref{plaquette holo}, and one simply has that
\be
\tr\left(h^{(\mb)}_{\Box_{ij}}\tau^k\right)=-\f{1}{2}\sin^2(\mb c){\eps_{ij}}^k.
\ee
Formula \eqref{trace formula} then leads to
\be
\tr_{(d)}\left(\d h^{(\mb)}_{\Box_{ij}}\d\tau^k\right)=\f{\sin^2(\mb c)}{4\sin\theta}\f{\partial}{\partial\theta}\left(\f{\sin(d\theta)}{\sin\theta}\right){\eps_{ij}}^k,
\ee
where the class angle, most easily computed in the fundamental representation, is given by
\be\label{class angle 1}
\theta=\arccos\left(\f{1}{2}\tr\left(h^{(\mb)}_{\Box_{ij}}\right)\right)=\arccos\left(\cos(\mb c)+\f{1}{2}\sin^2(\mb c)\right).
\ee
Putting this together with \eqref{curvature holo}, we finally get that the curvature operator can be approximated by
\be\label{curvature for dimension d}
\d F^k_{ab}=-\f{3}{d(d^2-1)}\f{1}{\mb^2\mathring{V}^{2/3}}\f{\sin^2(\mb c)}{\sin\theta}\f{\partial}{\partial\theta}\left(\f{\sin(d\theta)}{\sin\theta}\right){\eps_{ij}}^k\mathring{e}^i_a\mathring{e}^j_b.
\ee
From this, we can now obtain for example
\begin{subequations}
\be
{}^{(2)\!}F^k_{ab}&=\f{1}{\mb^2\mathring{V}^{2/3}}\sin^2(\mb c){\eps_{ij}}^k\mathring{e}^i_a\mathring{e}^j_b,\\
{}^{(3)\!}F^k_{ab}&=\f{1}{\mb^2\mathring{V}^{2/3}}\sin^2(\mb c)\f{1}{2}\big(2\cos(\mb c)+\sin^2(\mb c)\big){\eps_{ij}}^k\mathring{e}^i_a\mathring{e}^j_b,\\
{}^{(4)\!}F^k_{ab}&=\f{1}{\mb^2\mathring{V}^{2/3}}\sin^2(\mb c)\f{1}{10}\big(10-12\sin^2(\mb c)+12\sin^2(\mb c)\cos(\mb c)+3\sin^4(\mb c)\big){\eps_{ij}}^k\mathring{e}^i_a\mathring{e}^j_b.
\ee
\end{subequations}

\subsection{Connection regularization}

\noindent Because of spatial homogeneity, there is another useful way of writing the curvature operator. The reason is that $F^k_{ab}={\eps^k}_{ij}A^i_aA^j_b$ because of the vanishing of the spatial derivatives, and the connection can be expressed via the formula
\be\label{connection as function of holonomy}
A^j_a\tau_j\mathring{e}^a_i=\lim_{\mb\rightarrow0}\f{1}{4\mb\mathring{V}^{1/3}}\left(\d h_i^{(2\mb)}-\d h_{i^{-1}}^{(2\mb)}\right).
\ee
Here the factor of 2 multiplying $\mb$ has been introduced for simplifications that will occur later on. In particular, we will see that this factor of 2 is necessary for the regularization schemes HR and CR to agree in the case $d=2$. For $d>2$ however, the two schemes will always lead to different results.

Using \eqref{connection as function of holonomy}, the curvature can be approximated by
\be\label{curvature holo 2}
\d F^k_{ab}&=\f{1}{\tau(d)}\lim_{\mb\rightarrow0}\f{1}{16\mb^2\mathring{V}^{2/3}}\tr_{(d)}\left(\Big[\d h_i^{(2\mb)}-\d h_{i^{-1}}^{(2\mb)},\d h_j^{(2\mb)}-\d h_{j^{-1}}^{(2\mb)}\Big]\d\tau^k\right)\mathring{e}^i_a\mathring{e}^j_b\nn\\
&=\f{1}{\tau(d)}\lim_{\text{Ar}_\Box\rightarrow0}\f{1}{16\text{Ar}_\Box}\tr_{(d)}\left(\Big[\d h_i^{(2\mb)}-\d h_{i^{-1}}^{(2\mb)},\d h_j^{(2\mb)}-\d h_{j^{-1}}^{(2\mb)}\Big]\d\tau^k\right)\mathring{e}^i_a\mathring{e}^j_b\nn\\
&=\f{1}{\tau(d)}\f{1}{16\mb^2\mathring{V}^{2/3}}\tr_{(d)}\left(\Big[\d h_i^{(2\mb)}-\d h_{i^{-1}}^{(2\mb)},\d h_j^{(2\mb)}-\d h_{j^{-1}}^{(2\mb)}\Big]\d\tau^k\right)\mathring{e}^i_a\mathring{e}^j_b.
\ee
Notice that although we have not introduced a square plaquette $\Box$ for the regularization of the curvature (as we did previously in the scheme HR), in the second equality we have explicitly rewritten $\mb^2\mathring{V}^{2/3}$ as $\text{Ar}_\Box$. The reason for this is that, following the full theory, we need to argue that the area gap prevents us from taking the limit $\text{Ar}_\Box\rightarrow0$. However, since in LQG there is no length operator admitting a length gap, this reasoning cannot be made on the limit $\mb\rightarrow0$ and has to be done on the limit $\text{Ar}_\Box\rightarrow0$ instead.

Now, another way to write the expression \eqref{curvature holo 2} is to expand to commutator and to write the total trace as a sum of eight traces of the type
\be
\tr_{(d)}\left(\d h_{i^\epsilon}^{(2\mb)}\d h_{j^\sigma}^{(2\mb)}\d\tau^k\right),
\ee
where $\epsilon$ and $\sigma$ are equal to $\pm1$ depending on the term under consideration. Since this is the trace of the product of a group element with a Lie algebra element, we can again apply formula \eqref{trace formula}. More precisely, one can show that
\be
\tr\left(h_{i^\epsilon}^{(2\mb)}h_{j^\sigma}^{(2\mb)}\tau^k\right)=-\epsilon\sigma\sin^2(\mb c){\eps_{ij}}^k,
\ee
and use formula \eqref{trace formula} to write
\be
\tr_{(d)}\left(\Big[\d h_i^{(2\mb)}-\d h_{i^{-1}}^{(2\mb)},\d h_j^{(2\mb)}-\d h_{j^{-1}}^{(2\mb)}\Big]\d\tau^k\right)=4\f{\sin^2(\mb c)}{\sin\theta}\f{\partial}{\partial\theta}\left(\f{\sin(d\theta)}{\sin\theta}\right){\eps_{ij}}^k.
\ee
The class angle appearing in this formula is given by
\be\label{class angle 2}
\theta=\arccos\left(\f{1}{2}\tr\left(h_{i^\epsilon}^{(2\mb)}h_{j^\sigma}^{(2\mb)}\right)\right)=\arccos\big(\cos^2(\mb c)\big).
\ee
Finally, for the curvature operator we find once again the same formula, i.e.
\be
\d F^k_{ab}=-\f{3}{d(d^2-1)}\f{1}{\mb^2\mathring{V}^{2/3}}\f{\sin^2(\mb c)}{\sin\theta}\f{\partial}{\partial\theta}\left(\f{\sin(d\theta)}{\sin\theta}\right){\eps_{ij}}^k\mathring{e}^i_a\mathring{e}^j_b,
\ee
except that $\theta$ is now given by \eqref{class angle 2}. From this, we can now obtain
\begin{subequations}
\be
{}^{(2)\!}F^k_{ab}&=\f{1}{\mb^2\mathring{V}^{2/3}}\sin^2(\mb c){\eps_{ij}}^k\mathring{e}^i_a\mathring{e}^j_b,\\
{}^{(3)\!}F^k_{ab}&=\f{1}{\mb^2\mathring{V}^{2/3}}\sin^2(\mb c)\cos^2(\mb c){\eps_{ij}}^k\mathring{e}^i_a\mathring{e}^j_b,\\
{}^{(4)\!}F^k_{ab}&=\f{1}{\mb^2\mathring{V}^{2/3}}\sin^2(\mb c)\f{1}{5}\big(-1+6\cos^4(\mb c)\big){\eps_{ij}}^k\mathring{e}^i_a\mathring{e}^j_b.
\ee
\end{subequations}

\section{Heuristic effective dynamics}
\label{sec:4}

\noindent Collecting the above results, i.e. \eqref{harmonic gauge gravitational hamiltonian} and \eqref{curvature for dimension d}, we get that the regularized total Hamiltonian in the harmonic gauge is given by
\be\label{harmonic hamiltonian arbitrary d}
\d\,\mathcal{C}^\text{h}=\d\,\mathcal{C}_\text{g}^\text{h}+\mathcal{C}_\text{m}^\text{h}={}^{(2)}\mathcal{C}_\text{g}^\text{h}\mathcal{F}_d(\theta)+\f{1}{2}p_\phi^2=-\f{3}{\kappa\gamma^2\mb^2}|p|^2\sin^2(\mb c)\mathcal{F}_d(\theta)+\f{1}{2}p_\phi^2,
\ee
where, to facilitate the comparison between the case $d=2$ and the case of arbitrary $d>2$, we have introduced
\be
\mathcal{F}_d(\theta)\coloneqq-\f{3}{d(d^2-1)}\f{1}{\sin\theta}\f{\partial}{\partial\theta}\left(\f{\sin(d\theta)}{\sin\theta}\right).
\ee
The expression \eqref{harmonic hamiltonian arbitrary d} now has to be turned into a quantum operator in order to study the quantum dynamics (without inverse-volume corrections). As can be anticipated, the effect of higher spin representations in the quantum theory will come from (the operator version of) the function $\mathcal{F}_d(\theta)$. Some properties of this function are given in appendix \ref{appendixB}, where it is shown in particular that it is a polynomial in $\cos\theta$. By using the explicit expression of the class angle, one therefore gets that $\mathcal{F}_d(\theta)$ is a polynomial in $\cos(\mb c)$ and/or $\sin(\mb c)$ depending on the choice of regularization for the curvature.

Before moving on to the construction of the quantum theory, we would like to briefly study the classical dynamics obtained by using \eqref{harmonic hamiltonian arbitrary d} as a heuristic\footnote{We use here the term ``heuristic effective Hamiltonian'' (as opposed to ``rigorous effective Hamiltonian'') in order to stress the fact that this effective Hamiltonian has not been derived from the semi-classical limit of the quantum theory, but is simply a classical ansatz.} effective classical Hamiltonian. This will enable us to understand further certain of the properties of $\mathcal{F}_d(\theta)$.

\subsection{Equations of motion}

\noindent If we consider \eqref{harmonic hamiltonian arbitrary d} a as heuristic effective classical constraint, we can go to the proper time gauge without worrying about inverse-volume corrections. Then, the evolution in proper time is given as usual by $\partial_t(\cdot)=\lb\cdot,\d\,\mathcal{C}^1\rb$. Furthermore, it is now convenient to switch to the canonical variables \eqref{b and v variables}, in terms of which we get that
\be
\mb c=\lambda b,\q\q|p|^{3/2}=2\pi\gamma\lp^2|\nu|=\f{1}{2\sqrt{3}}\lambda^2|\nu|,
\ee
where $\lambda$ is defined in \eqref{lambda}. With these variables, the total Hamiltonian in the proper time gauge becomes
\be\label{N=1 hamiltonian}
\d\,\mathcal{C}^1=-\f{\sqrt{3}}{2\kappa\gamma^2}|\nu|\sin^2(\lambda b)\mathcal{F}_d(\theta)+\f{\sqrt{3}}{\lambda^2}|\nu|^{-1}p_\phi^2,
\ee
and we get the following set of equations of motion:
\begin{subequations}
\be
\dot{b}&=-\f{\sqrt{3}}{\kappa\gamma^2\hbar}\sin^2(\lambda b)\mathcal{F}_d(\theta)-\f{2\sqrt{3}}{\lambda^2\hbar}|\nu|^{-2}p_\phi^2=-\f{2\sqrt{3}}{\kappa\gamma^2\hbar}\sin^2(\lambda b)\mathcal{F}_d(\theta),\label{b dot eom}\\
\dot{\nu}&=\f{\sqrt{3}}{\kappa\gamma^2\hbar}|\nu|\sin^2(\lambda b)\left(\lambda\cot(\lambda b)\mathcal{F}_d(\theta)+\f{\partial\theta}{\partial b}\f{\partial}{\partial\theta}\mathcal{F}_d(\theta)\right),\\
\dot{\phi}&=\f{2\sqrt{3}}{\lambda^2}|\nu|^{-1}p_\phi,\\
\dot{p}_\phi&=0.
\ee
\end{subequations}
To obtain the second equality in \eqref{b dot eom}, we have used the vanishing of the total Hamiltonian to write
\be\label{constraint surface}
p_\phi^2=\f{\lambda^2}{2\kappa\gamma^2}|\nu|^2\sin^2(\lambda b)\mathcal{F}_d(\theta).
\ee

We can now follow the standard strategy of using the scalar field as an internal clock, and compute the derivative\footnote{Alternatively, we can arrive at this equation by computing the Poisson bracket $b'=\lb b,p_\phi\rb$, where $p_\phi$ is seen as the deparametrized Hamiltonian given by the square root of \eqref{constraint surface}.}
\be\label{first order b'}
b'\coloneqq\f{\partial b}{\partial\phi}=\f{\partial b}{\partial t}\f{\partial t}{\partial\phi}=\dot{b}(\dot{\phi})^{-1}=\dot{b}\f{\lambda^2}{2\sqrt{3}}|\nu|p_\phi^{-1}=\mp\sqrt{\f{2}{\kappa}}\f{\lambda}{\gamma\hbar}\sin(\lambda b)\sqrt{\mathcal{F}_d(\theta)},
\ee
where in the last step we have used the $\pm$ square root of \eqref{constraint surface}. As we will see below, the function $\mathcal{F}_d(\theta)$ has $d-2$ zeros and takes negatives values on $[0,\pi]\ni\theta$. Therefore, instead of evaluating numerically the above first order differential equation for $b$ as a function of the internal scalar field time $\phi$, it is more convenient to solve the following second order differential equation:
\be
b''=\f{2\lambda^2}{\kappa\gamma^2\hbar^2}\sin(\lambda b)\left(\lambda\cos(\lambda b)\mathcal{F}_d(\theta)+\f{1}{2}\sin(\lambda b)\f{\partial\theta}{\partial b}\f{\partial}{\partial\theta}\mathcal{F}_d(\theta)\right),
\ee
where the initial condition for $b'(0)$ is given by plugging $b(0)$ in \eqref{first order b'}. Finally, once the solution $b(\phi)$ has been found, one can use the vanishing of the Hamiltonian constraint to find the following evolution equation for the matter density:
\be\label{density as function of b}
\rho=\f{3}{\kappa\gamma^2\lambda^2}\sin^2(\lambda b)\mathcal{F}_d(\theta),
\ee
where $\theta$ has to be seen as a function $\theta\big(b(\phi)\big)$.

The heuristic effective dynamics for $b(\phi)$ and $\rho(\phi)$ depends obviously on the choice of regularization through the class angle $\theta$. In order to treat the two regularization schemes HR and CR at once, let us denote the corresponding class angles by
\be
\theta_1\coloneqq\arccos\left(\cos(\lambda b)+\f{1}{2}\sin^2(\lambda b)\right),\q\q\theta_2\coloneqq\arccos\big(\cos^2(\lambda b)\big).
\ee
With this, we can now go ahead and solve the above equations.

For $d=2$, we have $\mathcal{F}_2(\theta)=1$, and we recover of course the standard results of effective LQC. As can be seen on figure \ref{fig:1} (we perform all numerical computations with $G=\hbar=1$), the Universe is contracting for $\phi\in(-\infty,0)$, with its density increasing, and expanding for $\phi\in(0,+\infty)$, with its density decreasing. At $\phi=0$, with our choice $b(0)=\pi/(2\lambda)$ of initial conditions, the density reaches the upper bound $\rho_\text{c}$ and the contracting and expanding branches are connected by the quantum bounce. During the evolution, the canonical variable $b$ evolves from $\lim_{\phi\rightarrow-\infty}b(\phi)=0$ to $\lim_{\phi\rightarrow+\infty}b(\phi)=\pi/\lambda$.
\begin{figure}[h]
\begin{center}
\includegraphics{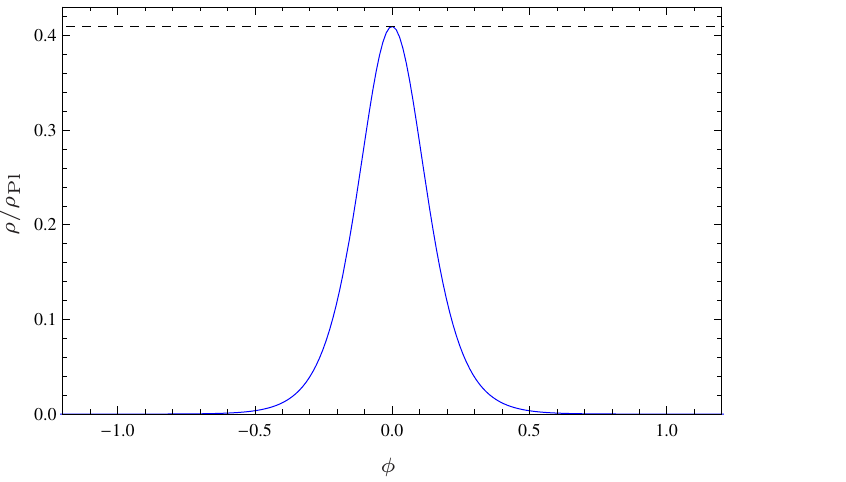}
\hspace{-1cm}
\includegraphics{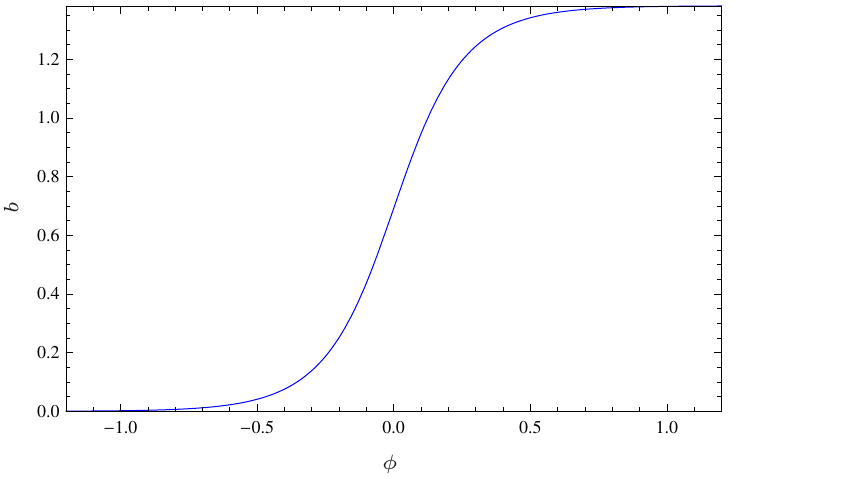}
\caption{Evolution, for $d=2$, of the energy density $\rho$ in Planck units (left) and the canonical variable $b$ (right) as functions of the scalar field time $\phi$. The dotted horizontal line lies at the critical density $\rho_\text{c}/\rho_\text{Pl}$.}
\label{fig:1}
\end{center}
\end{figure}

For $d=3$, we have $\mathcal{F}_3(\theta)=\cos\theta$. Using the regularization scheme CR and therefore the class angle $\theta=\theta_2$, we obtain an effective dynamics which is different from that of standard $d=2$ LQC. As can be seen on figure \ref{fig:2} (using initial conditions $b(0)=\pi/(2\lambda)$), the energy density is still bounded from above, indicating that the evolution is once again singularity-free. However, one can now observe that the density $\rho$ reaches its maximum $\rho_\text{max}$ more than once, and that this critical value of the density is lower than the critical density \eqref{critical density} arising for $d=2$. We are going to see below that this is a generic feature of the choice $d>2$.
\begin{figure}[h]
\begin{center}
\includegraphics{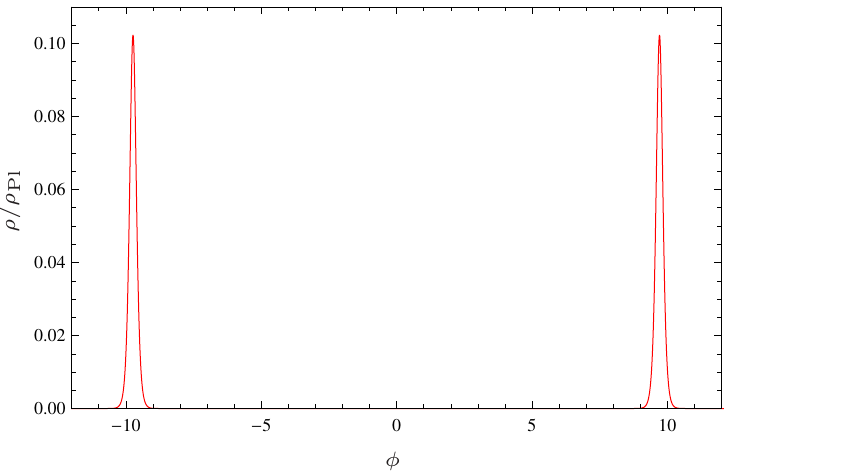}
\hspace{-1cm}
\includegraphics{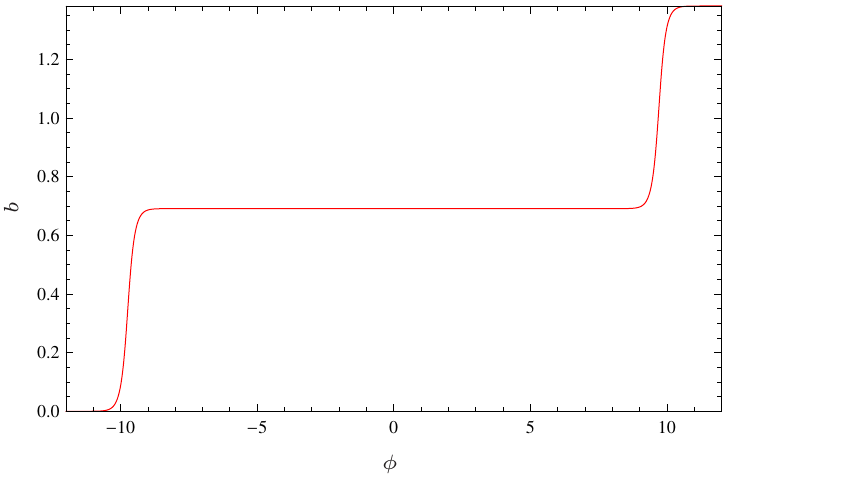}
\caption{Evolution, for $d=3$ and $\theta=\theta_2$, of the energy density $\rho$ in Planck units (left) and the canonical variable $b$ (right) as functions of the scalar field time $\phi$.}
\label{fig:2}
\end{center}
\end{figure}

\subsection{Boundedness of the density}

\noindent In order to understand further how $\mathcal{F}_d(\theta)$ affects the evolution of the density, let us study this function in more details. Remember that the class angle is a function $\theta=\theta(b)$ of the canonical variable $b$, which takes values in the compact interval $[0,\pi/\lambda]\ni b$. On this interval, the arguments of the class angles $\theta_1$ and $\theta_2$ satisfy
\be
-1\leq\cos(\lambda b)+\f{1}{2}\sin^2(\lambda b)\leq1,\q\q0\leq\cos^2(\lambda b)\leq1,
\ee
and we therefore always have that
\be
0\leq\theta_1\leq\pi,\q\q0\leq\theta_2\leq\pi/2.
\ee
The angles $\theta_1$ and $\theta_2$ as functions of $\lambda b\in[0,\pi]$ are represented on figure \ref{fig:theta} below.
\begin{figure}[h]
\begin{center}
\includegraphics{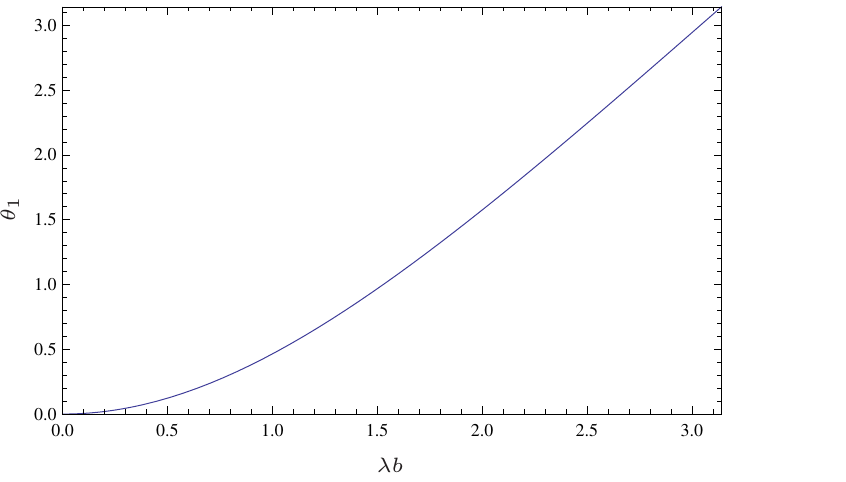}
\hspace{-1cm}
\includegraphics{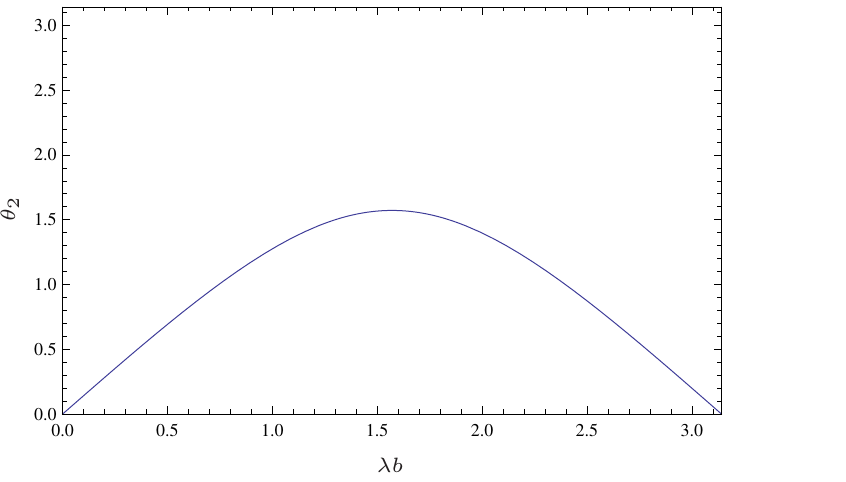}
\caption{Class angles $\theta_1$ (left) and $\theta_2$ (right) for the regularization schemes HR and CR, as functions of $\lambda b$.}
\label{fig:theta}
\end{center}
\end{figure}

By studying $\mathcal{F}_d(\theta)$ on $[0,\pi]\ni\theta$, and using the relation \eqref{density as function of b}, we can therefore infer how the matter density evolves when $b$ spans $[0,\pi/\lambda]$. As can be seen by using the explicit polynomial form \eqref{theta polynomial}, the function $\mathcal{F}_d(\theta)$ is continuous on $[0,\pi]$ (it is actually continuous on $\mathbb{R}$), satisfies
\be
\mathcal{F}_d(0)=1,\q\q\mathcal{F}_d(\pi/2)=
\left\{\begin{array}{l}
0\phantom{\displaystyle3(-1)^{d/2-1}}\text{for $d$ odd},\\[8pt]
\displaystyle\f{3(-1)^{d/2-1}}{d^2-1}\phantom{0}\text{for $d$ even},
\end{array}\right.
\q\q\mathcal{F}_d(\pi)=(-1)^{d},
\ee
for all $\mathbb{N}\ni d\geq2$, and $\mathcal{F}_2(\theta)=1$ for all $\theta$. Since we are dealing with a continuous function on a bounded interval, we are guaranteed that the density $\rho$ is never divergent throughout the effective classical evolution. However, we still have to make sure that, as $d$ increases, the image of $\mathcal{F}_d(\theta)$ remains bounded, thereby ensuring that the critical density itself does not become arbitrarily large when $d$ increases. As can be seen on the bound \eqref{bound on F}, this condition is indeed satisfied.

We can therefore conclude that, given the effective classical evolution equation \eqref{density as function of b} for $\rho$ as a function of $b\in[0,\pi/\lambda]$, we have $\rho\leq\rho_\text{c}$ for any choice of fixed $d\geq2$. Moreover, this holds regardless of the regularization scheme which is used for the curvature. As an illustration, the plot of the function $\mathcal{F}_d(\theta)$ for some values of $d$ is given in figure \ref{fig:3}.
\begin{figure}[h]
\begin{center}
\includegraphics{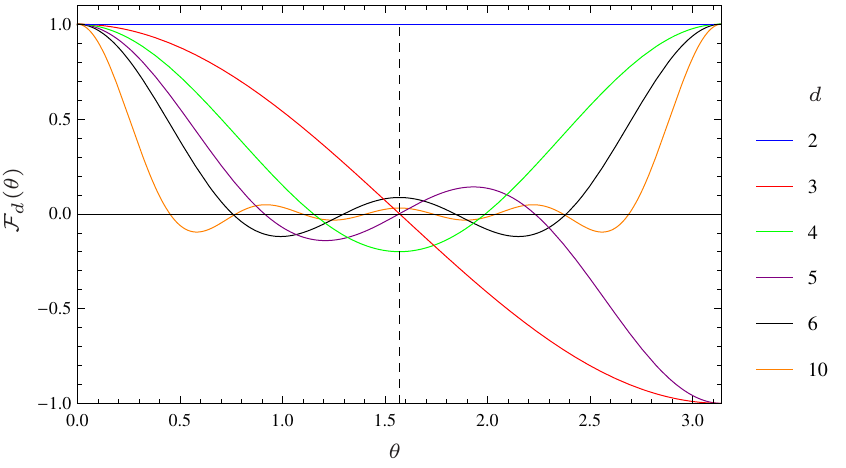}
\caption{Plot of the function $\mathcal{F}_d(\theta)$ for $\theta\in[0,\pi]$ and for different values of $d$ listed on the right. The dotted vertical line is at $\theta=\pi/2$.}
\label{fig:3}
\end{center}
\end{figure}

\subsection{Evolution of the density with a single-spin Hamiltonian}

\noindent As we have seen in the previous subsection, for any $d>2$ the function $\mathcal{F}_d(\theta)$ changes sign during the evolution of the variable $b$ on $[0,\pi/\lambda]$. As a result, the density obtained from the heuristic classical equation \eqref{density as function of b} has the unlikeable feature of not remaining positive. This can be seen in more details on figure \ref{fig:4}, where we have represented $\rho/\rho_\text{Pl}$ as a function of $b$ for different values of $d$ and for the two regularization schemes HR and CR. Note that equation \eqref{density as function of b}, which is responsible for the change of sign of $\rho$ because of the presence of $\mathcal{F}_d(\theta)$, is obviously in conflict with the standard relation \eqref{classical density}, which is manifestly positive. This is simply because we have here modified the dynamics by considering the effective LQC Hamiltonian constraint with arbitrary $d$.

This unsettling appearance of negative values of the density does most likely not reflect the emergence of new exotic physics when $d>2$, but rather the fact that the heuristic constraint \eqref{harmonic hamiltonian arbitrary d} is not a viable effective classical Hamiltonian. This is in fact to be expected since we have not derived \eqref{harmonic hamiltonian arbitrary d} from the proper semi-classical limit of the quantum theory. From this point of view, the fact that for $d=2$ the proper effective semi-classical Hamiltonian happens to coincide with the classical regularized Hamiltonian with holonomy corrections appears as a coincidence. 

In addition to this, one can see that the two different regularization schemes for the curvature operator lead to different evolution profiles for the density. Indeed, the bounce is not symmetric (in the contracting and expanding branches) for $\theta=\theta_1$ (i.e. for HR), while it is symmetric for $\theta=\theta_2$ (i.e. for CR).

\begin{figure}[h]
\begin{center}
\includegraphics{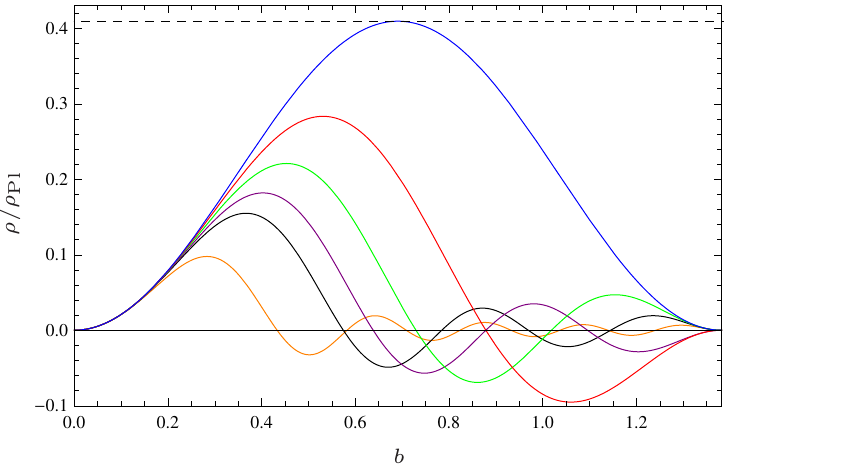}
\hspace{-1cm}
\includegraphics{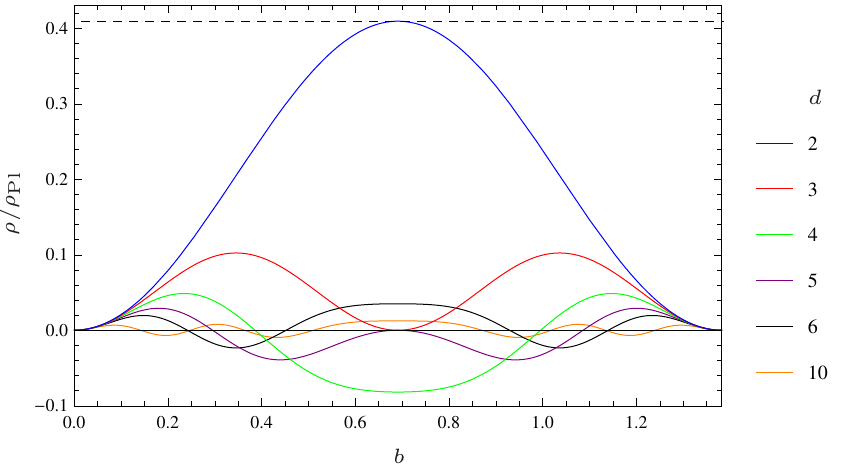}
\caption{Plots of the energy density $\rho$ (in Planck units) for $b\in[0,\pi/\lambda]$ and for different values of $d$ listed on the right. The dotted horizontal line represents the value $\rho=\rho_\text{c}$. The plot on the left is for $\theta=\theta_1$ and the one on the right for $\theta=\theta_2$.}
\label{fig:4}
\end{center}
\end{figure}

\subsection{Evolution of the density with a multiple-spin Hamiltonian}

\noindent Instead of considering the heuristic effective classical dynamics which arises for a single choice of $\SU(2)$ representation, one can consider a more complicated Hamiltonian which contains a contribution from several spins. As we are going to see, this leads generically to a heuristic effective classical dynamics for which the density remains positive throughout the evolution.

Let us consider a heuristic effective gravitational Hamiltonian which is given by a linear combination of $n$ constraints regularized with different values of spins. One reasonable possibility is to choose these spins to be successive half-integers from $j=1/2$ to $j=(n-1)/2$. In this case, the matter energy density can be written as
\be\label{summed rho}
\rho=\f{3}{\kappa\gamma^2\lambda^2}\sin^2(\lambda b)\sum_{d=2}^n\alpha_d\mathcal{F}_d(\theta),
\ee
with weights satisfying
\be
\alpha_d>0,\q\q\sum_{d=2}^n\alpha_d=1.
\ee
For the sake of definiteness, in order to satisfy these conditions let us further write the weights $\alpha_d$ in the form
\be
\alpha_d=\beta_d\left(\sum_{i=2}^n\beta_i\right)^{-1},
\ee
where $\beta_d>0$. In the absence of a more thorough connection between LQC and the full theory, it is not clear what physical principle should determine the choice of the weights $\beta_d$. For example, we could choose these weights to be the same for all the different spin values i.e. $\beta_d=1$, or assign higher weights to the lower spins, using for example an exponential damping for the contribution of the higher dimensions $d$.

The energy density as a function of $b$ is represented on figure \ref{fig:5} for three different choices of weights. The two upper plots are for exponentially decreasing weights, where we have chosen $\beta_d=\exp(-\sqrt{d})$. The two middle plots are for linearly decreasing weights, with $\beta_d=n+1-d$. The two lower plots are for equal weights, which is obtained by setting $\beta_d=1$. As announced above, one can see that the density remains positive throughout the evolution, and that the maximal density $\rho_\text{max}$ reached during the evolution is a decreasing function of the number $n$ of representations appearing in the Hamiltonian. As expected, we also recover the fact that the evolution is asymmetric for HR while it is symmetric for CR. When comparing the results obtained within a same regularization scheme HR or CR, one can see that there is a priori no major difference between the three choices of weights.

\newpage
\begin{figure}[h]
\begin{center}
\includegraphics{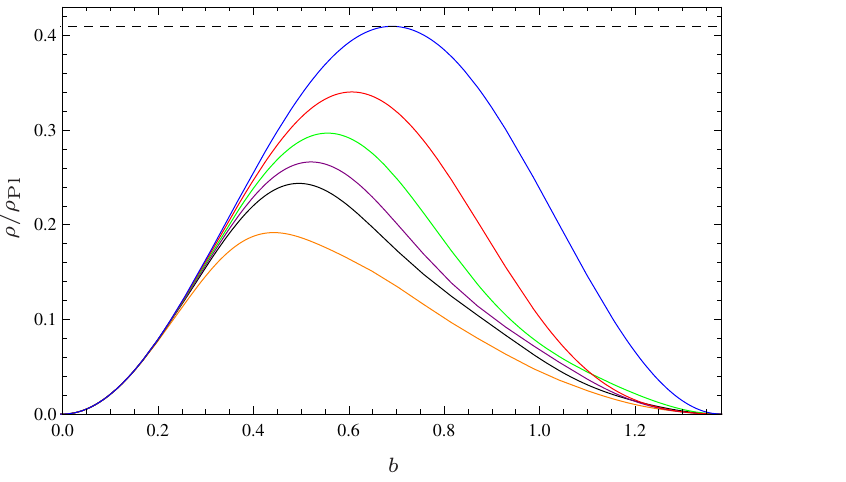}
\hspace{-1cm}
\includegraphics{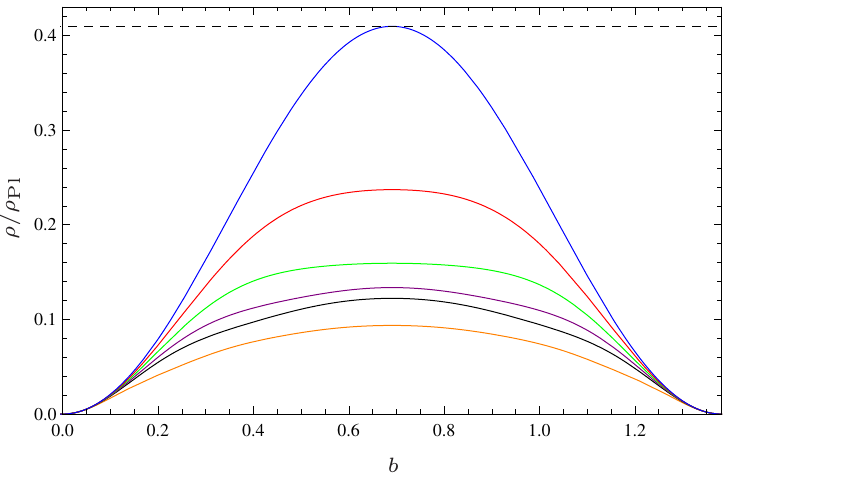}
\includegraphics{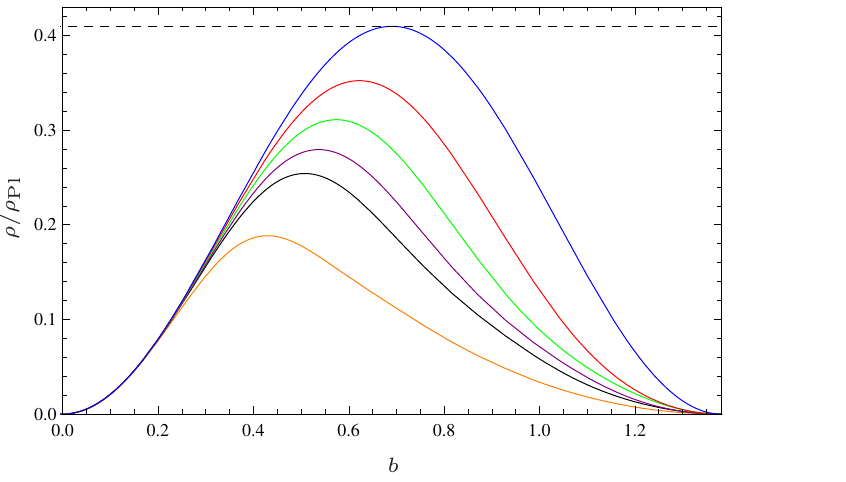}
\hspace{-1cm}
\includegraphics{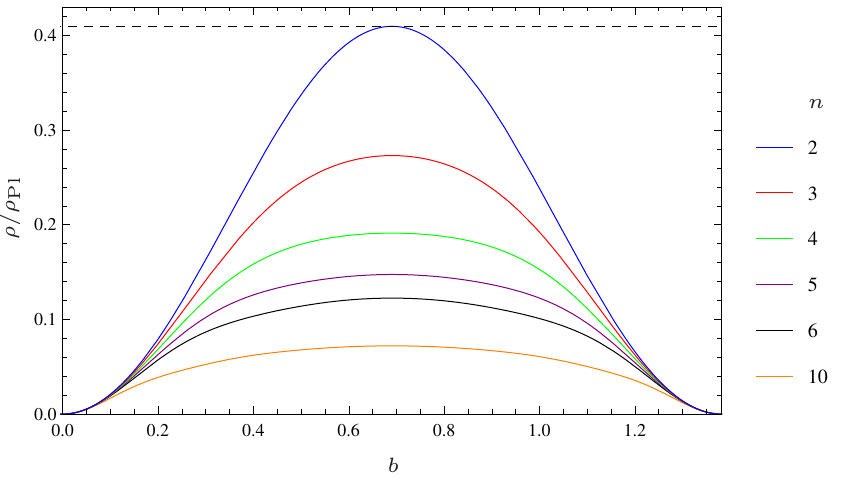}
\includegraphics{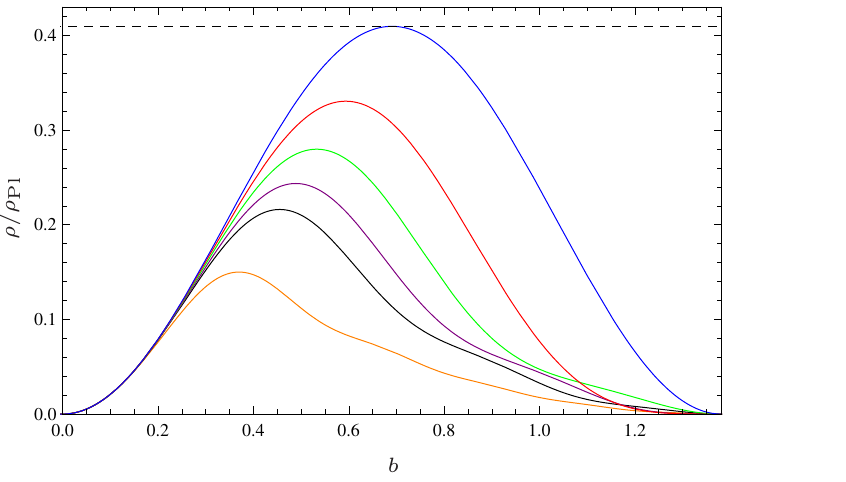}
\hspace{-1cm}
\includegraphics{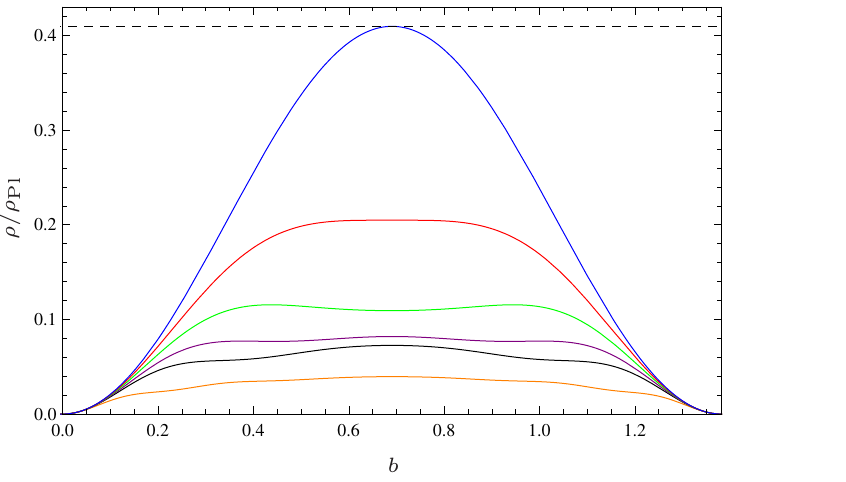}
\caption{Plots of the energy density $\rho$ (in Planck units) for $b\in[0,\pi/\lambda]$ and for different values of $n$ listed on the right. The left column is for $\theta=\theta_1$, while the one on the right is for $\theta=\theta_2$. The two upper plots are for exponentially decreasing weights with $\beta_d=\exp(-\sqrt{d})$. The two middle plots are for linearly decreasing weights with $\beta_d=n+1-d$. The two lower plots are for equal weights with $\beta_d=1$.}
\label{fig:5}
\end{center}
\end{figure}

However, if one increases the number $n$, which represents the maximal dimension allowed in the sum \eqref{summed rho}, the evolution of the density starts to depend strongly on the choice of weights $\beta_d$. More precisely, the maximal value $\rho_\text{max}$ reached along the evolution converges to zero if the weights $\beta_d$ are not decreasing fast enough with the dimension $d$. This can be seen on figure \ref{fig:6}, where we have represented the maximal density as a function of $n$ for different choices of weights. For equal weights $\beta_d=1$ or linearly decreasing weights $\beta_d=n+1-d$, the maximal density drops down to zero. On the other hand, for exponentially decreasing weights the density converges to a value $\bar{\rho}_\text{max}<\rho_\text{c}$. Numerically, for $n=2000$ one finds the values of $\bar{\rho}_\text{max}$ which are reported in table \ref{table:1}.

\begin{figure}[h]
\begin{center}
\hspace{-1cm}
\includegraphics{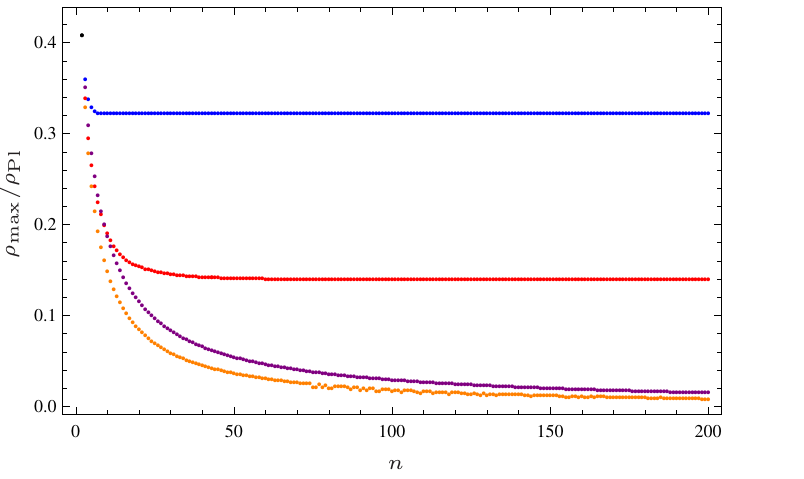}
\hspace{-1cm}
\includegraphics{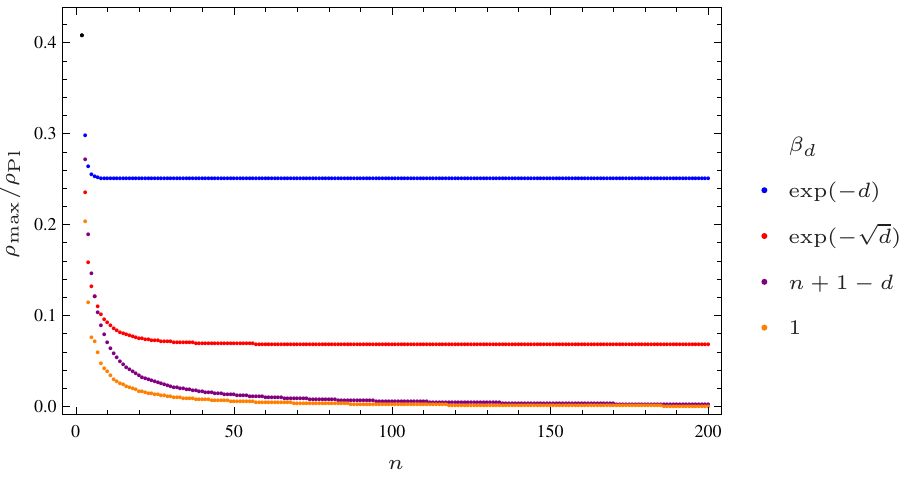}
\caption{Plot of the maximal density $\rho_\text{max}$ (in Planck units) as a function of the largest dimension $n$ allowed in the summed expression \eqref{summed rho}, for different choices of weights $\beta_d$, and for the regularization HR (left) and CR (right). The black dot in the top left corner corresponds to $n=2$ and the standard LQC value $\rho_\text{max}=\rho_\text{c}$. Note that the slight fluctuations appearing after $n>70$ for $\beta_d=1$ in the scheme HR (left) are simply due to numerical uncertainties.}
\label{fig:6}
\end{center}
\end{figure}

\begin{table}[h]
\begin{center}
\begin{tabular}{|c|c|c|}
\hline
\backslashbox{~~weight $\beta_d$}{scheme~~} & ~~HR~~ & ~~CR~~ \\
\hline
$\exp(-d)$ & ~~0.32317~~ & ~~0.25215~~ \\
\hline
$\exp(-\sqrt{d})$ & 0.14078 & 0.06961 \\
\hline
$n+1-d$ & 0.00183 & 0.00035 \\
\hline
$1$ & 0.00096 & 0.00018 \\
\hline
\end{tabular}
\end{center}
\caption{Maximal density $\bar{\rho}_\text{max}/\rho_\text{Pl}$ attained for $n=2000$, for different choices of weights and for the two regularization schemes.}
\label{table:1}
\end{table}

\section{Quantum theory}
\label{sec:5}

\noindent In this section, we briefly comment along the lines of \cite{V,BD} on the (in)stability of the solutions to the difference equations defining the quantum dynamics, and on the absence of relationship between this (in)stability and the change of sign in the energy density which has been observed in the previous sections.

Using the conjugate variables $(b,\nu)$, the total Hamiltonian \eqref{harmonic hamiltonian arbitrary d} in the harmonic gauge becomes
\be
\d\,\mathcal{C}^\text{h}=-\f{3\pi G\hbar^2}{2\lambda^2}|\nu|^2\sin^2(\lambda b)\mathcal{F}_d(\theta)+\f{1}{2}p_\phi^2.
\ee
This is the expression which has to be turned into a quantum operator. Let us work with wavefunctions in a choice of polarization $\Psi(\nu,\phi)$. On these wavefunctions, the exponentiated connection operators act as finite translations, i.e. as
\be
\widehat{\exp(\pm\i\mb c)}\Psi(\nu,\phi)=\widehat{\exp(\pm\i\lambda b)}\Psi(\nu,\phi)=\Psi(\nu\pm2\lambda,\phi),
\ee
while the volume operator acts by multiplication, i.e. as
\be
\widehat{V}\Psi(\nu,\phi)=2\pi\gamma\lp^2|\hat{\nu}|\Psi(\nu,\phi)=2\pi\gamma\lp^2|\nu|\Psi(\nu,\phi).
\ee
Using as usual a Schr\"odinger representation for the massless scalar field representing the matter sector, we get the following quantum evolution equation:
\be
\partial^2_\phi\Psi(\nu,\phi)=-\f{3\pi G}{\lambda^2}|\hat{\nu}|^2\widehat{\sin^2(\lambda b)}\widehat{\mathcal{F}}_d(\theta)\Psi(\nu,\phi).
\ee
Here our choice of operator ordering is such that the volume operator $|\hat{\nu}|^2$ is to the left. This is the choice which will lead to the simplest expression for the difference equations when the $b$-dependent operators will act on the wavefunctions.

By studying the explicit expression for $\mathcal{F}_d(\theta)$ as a function of $\cos(\lambda b)$ and/or $\sin(\lambda b)$, one can see that, regardless of the regularization scheme which is used for the curvature, the right-hand side of the quantum evolution equation will always be a difference equation of step $\Delta\lambda=4(d-1)=8j$. In order to see this explicitly on a few examples, let us introduce the notations $\Psi_{\pm n}\coloneqq\Psi(\nu\pm n\lambda,\phi)$ and $\Psi\coloneqq\Psi_0=\Psi(\nu,\phi)$.
For $d=2$ we get
\be\label{d=2 difference equation}
\partial^2_\phi\Psi=\f{3\pi G}{\lambda^2}\nu^2\f{1}{4}\big(\Psi_{+4}-2\Psi+\Psi_{-4}\big).
\ee
For $d=3$ and $\theta=\theta_1$ we get
\be
\partial^2_\phi\Psi=-\f{3\pi G}{\lambda^2}\nu^2\f{1}{32}\big(\Psi_{+8}-4\Psi_{+6}-4\Psi_{+4}+4\Psi_{+2}+6\Psi+4\Psi_{-2}-4\Psi_{-4}-4\Psi_{-6}+\Psi_{-8}\big).
\ee
For $d=3$ and $\theta=\theta_2$ we get
\be\label{d=3 difference equation}
\partial^2_\phi\Psi=\f{3\pi G}{\lambda^2}\nu^2\f{1}{16}\big(\Psi_{+8}-2\Psi+\Psi_{-8}\big).
\ee
For $d=4$ and $\theta=\theta_1$ we get
\be
\partial^2_\phi\Psi&=\f{3\pi G}{\lambda^2}\nu^2\f{1}{640}\big(3\Psi_{+12}-24\Psi_{+10}+30\Psi_{+8}+72\Psi_{+6}+13\Psi_{+4}-48\Psi_{+2}-92\Psi\nn\\
&\phantom{=\f{3\pi G}{\lambda^2}\nu^2\f{1}{640}\big(}-48\Psi_{-2}+13\Psi_{-4}+72\Psi_{-6}+30\Psi_{-8}-24\Psi_{-10}+3\Psi_{-12}\big).
\ee
For $d=4$ and $\theta=\theta_2$ we get
\be
\partial^2_\phi\Psi=\f{3\pi G}{\lambda^2}\nu^2\f{1}{160}\big(3\Psi_{+12}+6\Psi_{+8}-11\Psi_{+4}+4\Psi-11\Psi_{-4}+6\Psi_{-8}+3\Psi_{-12}\big).
\ee

In order to analyze the stability of these difference equations, we will adopt the procedure developed in \cite{BD}. In the large volume limit, one can look for solutions for the wavefunction of the form $\Psi_{\pm n}(z)=\Psi(\nu\pm n\lambda)=z^{\pm n}$, and demand the vanishing of the right-hand side of the quantum evolution equation. This leads to a homogeneous equation in $z$. The requirement of stability is that this equation has no solutions with norm greater than one. Let us apply this criterion to the case $d=2$. The difference equation and the related homogeneous equation are given by
\be
\Psi_{+4}-2\Psi+\Psi_{-4}=0\q\rightarrow\q z^4-2+z^{-4}=0.
\ee
The solutions are $z\in\{-1,1,i,-i\}$ and all satisfy $|z|^2=1$. This difference equation is therefore stable and there are no spurious solutions spoiling the semi-classical limit (i.e. the large volume behavior).

We can now turn to the case $d=3$. Working with $\theta_1$ (and therefore HC) as in \cite{V}, the difference equation leads to the homogeneous equation
\be
z^8-4z^6-4z^4+4z^2+6+4z^{-2}-4z^{-4}-4z^{-6}+z^{-8}=0,
\ee
which does admit solutions such that $|z|^2>1$. These spurious solutions turn out to spoil the semi-classical limit of the theory, as already pointed out in \cite{V}, making this particular realization of the $d=3$ model of LQC unstable. However, we can proceed to the same stability analysis for the difference equation obtained with $\theta_2$ (and therefore from CR). This leads to
\be
z^8-2+z^{-8}=0,
\ee
and one can check that all the solutions to this equation satisfy $|z|^2=1$. Using the freedom in choosing a different regularization scheme for the curvature (namely CR instead of HR), it is therefore possible, for flat FLRW models at least, to obtain a stable quantum theory for $d=3$. It is straightforward to see that \eqref{d=2 difference equation} and \eqref{d=3 difference equation} can be mapped to each other by simply redefining the step size $\lambda$. This has the simple consequence that the quantum theories, in these two cases, have the exact same qualitative behavior, with only minor differences in some numerical prefactors. For example, the critical density for $(d=3,\text{CR})$ will be lower than that for $d=2$, but all other conceptual features will remain exactly identical. It is probably wise to interpret this ``standard'' behavior of the $(d=3,\text{CR})$ LQC model as a coincidence due to the existence of the regularization CR. It should be clear from the analysis of this section that other choices of $d$ will lead to completely different physics in the quantum theory, and that this latter should be defined (if at all possible) with extra care.

Now, one can also check numerically that any choice $d>4$ will fail the stability criterion of \cite{BD}. As explained in the introduction, it is tempting at this point to relate the failure of passing this stability criterion, i.e. the existence of spurious solutions, to the fact that the heuristic effective equations at the classical level lead to a negative energy density. Indeed, for $d=2$ the density remains positive and the stability criterion is passed. For $(d=3,\text{CR})$, one can see on figure \ref{fig:4} that the density remains positive, and we have just shown that the stability criterion is passed as well. However, for $(d=3,\text{HR})$ and any $d>3$, the density will reach negative values during the Hubble evolution, and the stability criterion always fails. These two notions (the sign of the energy density and the stability) therefore seem to be related. However, one rather puzzling feature is that, when using a sum over several representations, the quantum theory still fails the stability criterion although the energy density is positive (as we saw on figure \ref{fig:5}).

To be more concrete, let us consider the quantum evolution equation which is obtained by allowing different representations. Following what we have done in this section and in section \ref{sec:4}, this leads to
\be
\partial^2_\phi\Psi(\nu,\phi)=-\f{3\pi G}{\lambda^2}|\hat{\nu}|^2\widehat{\sin^2(\lambda b)}\sum_{d=2}^n\alpha_d\widehat{\mathcal{F}}_d(\theta)\Psi(\nu,\phi).
\ee
Choosing for example $n=3$ (i.e. the representations $d=2$ and $d=3$), the regularization HR, and the equal weights $\beta_d=1$, we obtain the quantum evolution equation
\be
\partial^2_\phi\Psi=-\f{3\pi G}{\lambda^2}\nu^2\f{1}{64}\big(\Psi_{+8}-4\Psi_{+6}-12\Psi_{+4}+4\Psi_{+2}+22\Psi+4\Psi_{-2}-12\Psi_{-4}-4\Psi_{-6}+\Psi_{-8}\big).
\ee
Classically, the evolution of the density in this case is represented by the red curve on the lower left plot of figure \ref{fig:5}. In spite of this evolution being positive, the homogeneous equation obtained from the above quantum evolution equation has solutions with $|z|^2>1$. For $n=3$, the regularization CR, and the equal weights $\beta_d=1$, we obtain
\be
\partial^2_\phi\Psi=\f{3\pi G}{\lambda^2}\nu^2\f{1}{32}\big(\Psi_{+8}+4\Psi_{+4}-10\Psi+4\Psi_{-4}+\Psi_{-8}\big).
\ee
Now, even though we are summing the two cases discussed above which passed the stability criterion, namely $(d=2,\text{CR})$ and $(d=3,\text{CR})$, we obtain a quantum theory which fails the stability criterion.

Although we do not push the study of the quantum theory further, it would be very interesting to understand more precisely the relationship between the sign of the energy density and the stability criterion of \cite{BD}, and to understand how the spurious solutions could be regulated with the proper construction of the physical inner product. It seems that this is a crucial condition if one wishes to define properly LQC with higher spin representations and establish a connection with the full theory.

\section{Inverse-volume corrections}
\label{sec:6}

\noindent Since the manipulation of classical expressions does not commute with the quantization process, one can naturally expect to obtain different quantum theories for different choices of lapse (and of orderings) when the vanishing of the total Hamiltonian is applied as a quantum constraint. In the previous section, we have studied the construction of the quantum theory using the harmonic time gauge, i.e. with the choice of lapse $N=|p|^{3/2}$. However, in order to mimic more closely the features of the full theory, where the choice of the harmonic time gauge is not viable (see footnote 2), one can also choose to work with the proper time gauge, i.e. with $N=1$. This then leads to the appearance of the so-called inverse-volume corrections.

More precisely, when working with the proper time as opposed to the harmonic time, different powers of the volume $V=|p|^{3/2}$ appear in the total Hamiltonian constraint. First, as can be seen on \eqref{matter hamiltonian}, the Hamiltonian for a massless scalar field contains the inverse-volume contribution $V^{-1}=|p|^{-3/2}$. Using Thiemann's trick, this contribution can be rewritten (or rather regularized) classically as
\be\label{classical inverse-volume}
{}^{(d,r)}V^{-1}=\left[-\f{1}{\tau(d)}\f{1}{\kappa\gamma\mb r}\tr_{(d)}\left(\d h_i^{(\mb)}\lb\d h_{i^{-1}}^{(\mb)},V^{2r/3}\rb\d\tau^i\right)\right]^{3/[2(1-r)]}.
\ee
Using the fact that
\be
\d h_i^{(\mb)}\lb\d h_{i^{-1}}^{(\mb)},V^{2r/3}\rb=-\mb\lb c,V^{2r/3}\rb\d\tau_i=-\f{\kappa\gamma\mb r}{3}|p|^{r-1}\d\tau_i,
\ee
together with the normalization
\be
\tr_{(d)}\left(\d\tau_i\d\tau^i\right)=3\tau(d),
\ee
one finds indeed that \eqref{classical inverse-volume} reduces to $|p|^{3/2}$. Without computing the Poisson bracket classically, and simply turning the right-hand side of \eqref{classical inverse-volume} into a quantum operator instead, one obtains an expression for the inverse-volume operator ${}^{(d,r)}\widehat{V}^{-1}$ which depends on the two regularization ambiguities $d$ and $r\in(0,1)$. In the literature, it is customary to use the values $r=1/2$ or $r=3/4$. Note that there exist additional ambiguities in the definition of the inverse-volume operator, as explained in section IV.A of \cite{SWE}, but we do not consider them here.

Next, one has to look at how the choice $N=1$ modifies the gravitational part of the Hamiltonian constraint. The simplest possibility is to consider the first term in \eqref{N=1 hamiltonian}, which is written in terms of classical quantities that all posses well-defined quantum analogues (at the difference with the second term, which should be written using the above inverse-volume formula). However, in order to mimic more closely \eqref{classical inverse-volume} and the full theory, it is common to also use Thiemann's trick to rewrite the gravitational part of the Hamiltonian constraint. More precisely, one can use the scheme HR combined with Thiemann's trick to write
\be\label{i.v. gravitational hamiltonian 1}
\d\,\mathcal{C}_\text{g}^1=\f{1}{\tau(d)}\f{1}{\kappa^2\gamma^3\mb^3}\eps^{ijk}\tr_{(d)}\left(\d h^{(\mb)}_{\Box_{ij}}\d h_k^{(\mb)}\lb\d h_{k^{-1}}^{(\mb)},V\rb\right).
\ee
Alternatively, one can use the scheme CR together with Thiemann's trick to write
\be\label{i.v. gravitational hamiltonian 2}
\d\,\mathcal{C}_\text{g}^1=\f{1}{\tau(d)}\f{1}{32\kappa^2\gamma^3\mb^3}\eps^{ijk}\tr_{(d)}\left(\Big[\d h_i^{(2\mb)}-\d h_{i^{-1}}^{(2\mb)},\d h_j^{(2\mb)}-\d h_{j^{-1}}^{(2\mb)}\Big]\d h_k^{(2\mb)}\lb\d h_{k^{-1}}^{(2\mb)},V\rb\right).
\ee
When expanding the holonomies in powers of $\mb$ and taking the limit $\mb\rightarrow0$, one can indeed check that these two expressions lead to \eqref{reduced gravitational hamiltonian} with $N=1$.

To go to the quantum theory, one should now take either of these expressions, turn the holonomies and the volume into quantum operators, and replace the Poisson bracket by a commutator. However, one then runs into the problem that formula \eqref{trace formula} cannot be used in order to evaluate the trace in an arbitrary representation of dimension $d$. This is because in the quantum theory, once the Poisson bracket has been replaced by a commutator, this trace is not anymore that of a Lie algebra element multiplied by a group element (which would have been the case if the Poisson bracket had been computed classically and the volume factor taken out of the trace). One alternative way of evaluating the traces in \eqref{i.v. gravitational hamiltonian 1} and \eqref{i.v. gravitational hamiltonian 2} is of course to go back to the formulas of \cite{V,CL}, but this would render useless the simplifications which have been obtained in section \ref{sec:3} for the regularization of the curvature. It is therefore convenient at this point to separate the contributions from the regularized curvature and inverse-volume term. Indeed, this latter can be written as
\be
q^{-1/2}E^a_iE^b_j{\eps^{ij}}_k=-\f{1}{\tau(d)}\f{2}{\kappa\gamma\mb\mathring{V}^{1/3}}\tr_{(d)}\left(\d h_l^{(\mb)}\lb\d h_{l^{-1}}^{(\mb)},V\rb\d\tau_k\right)\eps^{ijl}\mathring{e}^a_i\mathring{e}^b_j\mathring{q}^{1/2},
\ee
which, when combined with the expression $\d F^k_{ab}$ for the regularized curvature, leads to the following expression for the gravitational part of the Hamiltonian constraint:
\be\label{i.v. gravitational hamiltonian final}
\d\,\mathcal{C}_\text{g}^1=\f{1}{\tau(d)}\f{2}{\kappa^2\gamma^3\mb^3}\tr_{(d)}\left(\d h_i^{(\mb)}\lb\d h_{i^{-1}}^{(\mb)},V\rb\d\tau^i\right)\sin^2(\mb c)\mathcal{F}_d(\theta).
\ee
This expression is valid for the two regularization schemes HR and CR (since it depends on the choice of angle $\theta$), and one can check that in the limit $\mb\rightarrow0$ it leads back to \eqref{reduced gravitational hamiltonian} with $N=1$.

We are therefore left with the task of evaluating the traces in the quantum versions of \eqref{classical inverse-volume} and \eqref{i.v. gravitational hamiltonian final}, which are obtained by replacing $\lb\cdot\,,\cdot\rb\rightarrow\big[\cdot\,,\cdot\big]/(\i\hbar)$ and turning the holonomies and the volume into quantum operators. Using gauge invariance, we can write
\be
\tr_{(d)}\left(\d\hat{h}_i^{(\mb)}\left[\d\hat{h}_{i^{-1}}^{(\mb)},\widehat{V}^\alpha\right]\d\tau^i\right)=3\tr_{(d)}\left(\d\hat{h}_3^{(\mb)}\left[\d\hat{h}_{3^{-1}}^{(\mb)},\widehat{V}^\alpha\right]\d\tau_3\right),
\ee
where we have left the power of the volume unspecified for generality. Now, this trace can easily be computed using the fact that it involves diagonal matrices. More precisely, for the holonomies we have
\be
\left(\d\hat{h}_3^{(\mb)}\right)_{mn}=\widehat{\exp(\i m\mb c)}\delta_{mn},
\ee
for the commutator term we have
\be
\left(\d\hat{h}_3^{(\mb)}\left[\d\hat{h}_{3^{-1}}^{(\mb)},\widehat{V}^\alpha\right]\right)_{mn}=\left(\widehat{V}^\alpha-\widehat{\exp(\i m\mb c)}\widehat{V}^\alpha\widehat{\exp(-\i m\mb c)}\right)_{mn},
\ee
and for the third generator of $\su(2)$ we have
\be
\left(\d\tau_3\right)_{mn}=\i m\delta_{mn},
\ee
with everywhere $m,n\in\{-j,-j+1,\dots,j-1,j\}$. Putting this together, and switching back to the $(b,\nu)$ representation, we obtain the following expression for the gravitational part of the constraint operator:
\be
\d\,\widehat{\mathcal{C}}_\text{g}^{\,1}=\f{1}{\tau(d)}\f{2\sqrt{3}\pi\lp^2}{\kappa^2\gamma^2\lambda\hbar}|\hat{\nu}|\left(\sum_{m=-j}^jm\left(|\hat{\nu}|-\widehat{\exp(\i m\lambda b)}|\hat{\nu}|\widehat{\exp(-\i m\lambda b)}\right)\right)\widehat{\sin^2(\lambda b)}\widehat{\mathcal{F}}_d(\theta).
\ee
Notice that here we have made a choice of operator ordering similar to that of the previous section, namely with the volume factors on the left. Finally, for the inverse-volume operator we obtain
\be
{}^{(d,r)}\widehat{V}^{-1}=\left[-\f{1}{\tau(d)}\f{\big[3\big(2\pi\gamma\lp^2\big)^r\big]^{2/3}}{\kappa\gamma(2\lambda)^{1/3}\hbar r}|\hat{\nu}|^{1/3}\left(\sum_{m=-j}^jm\left(|\hat{\nu}|^{2r/3}-\widehat{\exp(\i m\lambda b)}|\hat{\nu}|^{2r/3}\widehat{\exp(-\i m\lambda b)}\right)\right)\right]^{3/[2(1-r)]}.
\ee
This operator has a very simple diagonal action on states, ${}^{(d,r)}\widehat{V}^{-1}\Psi(\nu,\phi)={}^{(d,r)}B(\nu)\Psi(\nu,\phi)$, where the eigenvalues are given by
\be
{}^{(d,r)}B(\nu)\coloneqq\left[-\f{1}{\tau(d)}\f{\big[3\big(2\pi\gamma\lp^2\big)^r\big]^{2/3}}{\kappa\gamma(2\lambda)^{1/3}\hbar r}|\nu|^{1/3}\left(\sum_{m=-j}^jm\left(|\nu|^{2r/3}-|\nu+2\lambda m|^{2r/3}\right)\right)\right]^{3/[2(1-r)]}.
\ee
With this, the vanishing of the total Hamiltonian constraint operator finally leads to
\be\label{final formula}
\partial^2_\phi\psi(\nu,\phi)=\f{2}{\hbar^2}\left[{}^{(d,r)}B(\nu)^{-1}\right]\d\,\widehat{\mathcal{C}}_\text{g}^{\,1}\Psi(\nu,\phi).
\ee
This expression is valid for the two regularization schemes HR and CR, and takes into account arbitrary representations as well as the inverse-volume corrections.

\section{Conclusion}

\noindent In this paper, we have pushed further the attempts made in \cite{V,CL,BAGN} to regularize the Hamiltonian constraint of (flat FLRW) LQC with arbitrary spin representations. We have seen that the use of the trace formula \eqref{trace formula} together with its rewriting via \eqref{theta polynomial} provides an extremely efficient way of bypassing the complicated results of \cite{V,CL}, and leads to an expression which can be applied to various regularization schemes for the curvature. As a consequence of this result, we can anticipate that the extension to more complicated cosmological models (including for example spatial curvature or anisotropies) will be rather straightforward. Indeed, we have seen that the extension of any model of LQC from $j=1/2$ to $j>1/2$ involves using the results of the regularization for $j=1/2$ together with the formulas \eqref{trace formula} and \eqref{theta polynomial}. There is therefore a systematic construction for which all the ingredients are now known.

In the LQC literature, lots of attention has been devoted to so-called higher order holonomy corrections \cite{CL,MS,HMS,CL2}. These arise as higher order terms in powers of $\mb$, which have been neglected when dropping the limit $\mb\rightarrow0$ in the expression for the regularized curvature. In \cite{CL,CL2}, it has been suggested that there could be a relationship between these higher order holonomy corrections and the contribution of higher spins representations. Our analysis of the effective dynamics carried out in section \ref{sec:4} seems to indicate that these two type of effects have actually very different consequences. Indeed, in \cite{CL} (see equation (4.44)) it has been shown that the inclusion of higher order holonomy corrections increases the critical density (as the order of the corrections is itself increased), while our analysis shows that the inclusion of more spin representations in the dynamics has the effect of lowering the critical density (as can be seen on figure \ref{fig:6}).

There are three interesting directions in which this work could be extended. First, one should construct rigorously the quantum theory corresponding to higher spin representations, and clarify the status of the spurious solutions which have been identified in \cite{BD} and in section \ref{sec:5}. Second, it would be interesting to see if the higher spin representations could have any phenomenological implications, in the spirit of what has been done in \cite{CL,MS,HMS,CL2} for the higher order holonomy corrections. Finally, we hope that the results presented in this note will help clarify the relationship between LQC and GFT/full LQG.

\begin{center}
\textbf{Acknowledgements}
\end{center}
MG and SB would like to thank Martin Bojowald, Steffen Gielen, and Edward Wilson-Ewing for helpful comments, suggestions, and discussions. SB acknowledges hospitality of Perimeter Institute for Theoretical Physics, where part of this work was completed. MG is supported by Perimeter Institute for Theoretical Physics. Research at Perimeter Institute is supported by the Government of Canada through Industry Canada and by the Province of Ontario through the Ministry of Research and Innovation.

\appendix

\section{Some useful $\boldsymbol{\SU(2)}$ formulae}
\label{appendixA}

\noindent In the representation of spin $j$, the Lie algebra $\su(2)$ is generated by matrices of dimension $d\times d$, where $d=2j+1$ is the dimension of the representation. We will denote these matrices by $\d\tau_i$, and in the case $d=2$ omit the dimension label for the sake of simplicity. Our choice of generators for the fundamental representation is given by the matrices
\be
\tau_1=-\f{\i}{2}
\begin{pmatrix}
~0~&~1~\\
~1~&~0~
\end{pmatrix},\q\q
\tau_2=\f{1}{2}
\begin{pmatrix}
~0~&-1~\\
~1~&~0~
\end{pmatrix},\q\q
\tau_3=-\f{\i}{2}
\begin{pmatrix}
~1~&~0~\\
~0~&-1~
\end{pmatrix},
\ee
which are defined in terms of the Pauli matrices by $2\i\tau_i=\sigma_i$, and satisfy the commutation relations $[\tau_i,\tau_j]={\eps_{ij}}^k\tau_k$ as well as the identity
\be
\tau_i\tau_j=\f{1}{2}{\eps_{ij}}^k\tau_k-\f{1}{4}\delta_{ij}\openone.
\ee
In particular, we have that $-4\tau_i^2=\openone$. From this, one can obtain the following useful identity:
\be
\tau_j\tau_i\tau_j=-\f{1}{4}(2\delta_{ij}-1)\tau_i.
\ee
For arbitrary spin, the representation matrices satisfy
\be
\tr_{(d)}\left(\d\tau_i\d\tau_k\right)=-\f{1}{12}d(d^2-1)\delta_{ik}=-\f{1}{3}j(j+1)(2j+1)\delta_{ik},
\ee
which we use to define the normalization factor
\be\label{trace normalization}
\tau(d)\coloneqq\tr_{(d)}\left(\d\tau_i^2\right)=-\f{1}{12}d(d^2-1).
\ee
The relation between the trace in the fundamental representation and that in an arbitrary representation of dimension $d$ is given by
\be
\tr_{(d)}\left(\d\tau_i\d\tau_k\right)=\f{\tau(d)}{\tau(2)}\tr(\tau_i\tau_k).
\ee

The main technical result which we use the the core of the article in order to derive the regularized expression for the curvature in arbitrary representation is the following:
\begin{proposition}\label{lemma}
Let $\d h$ be an element of $\SU(2)$ and $\d\tau_i$ a generator of $\su(2)$, both taken in the representation of dimension $d$. Then we have that
\be\label{trace formula}
\tr_{(d)}\left(\d h\d\tau_i\right)=-\f{1}{2\sin\theta}\f{\partial}{\partial\theta}\left(\f{\sin(d\theta)}{\sin\theta}\right)\tr(h\tau_i),
\ee
where
\be
\theta=\arccos\left(\f{1}{2}\tr(h)\right)\in[0,\pi]
\ee
is the class angle of the group element $h$ computed in the representation of dimension $d=2$.
\end{proposition}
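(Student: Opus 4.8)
The plan is to use the fact that every element of $\SU(2)$ is conjugate to a diagonal one, reduce the trace $\tr_{(d)}(\d h\d\tau_i)$ to an explicit sum over the weights of the representation, and then re-express the result covariantly in terms of the fundamental trace $\tr(h\tau_i)$. First I would write any group element in the fundamental representation as $h=\exp(2\theta\,\hat n^i\tau_i)=\cos\theta\,\openone+2\sin\theta\,\hat n^i\tau_i$, where $\hat n$ is a unit vector and $\theta=\arccos(\tr(h)/2)$ is precisely the class angle, so that in dimension $d$ one has $\d h=\exp(2\theta\,\hat n^i\d\tau_i)$. The key observation is that both sides of \eqref{trace formula} transform in the same covariant way under conjugation $h\mapsto ghg^{-1}$ by $g\in\SU(2)$: using cyclicity of the trace and the fact that $\d(g^{-1}\tau_ig)$ is the $\SO(3)$ rotation of $\d\tau_i$ associated with $g$, both $\tr_{(d)}(\d h\d\tau_i)$ and $\tr(h\tau_i)$ rotate as the $i$-th component of a vector, while $\theta$ (hence any class function of $\theta$) is left invariant. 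It therefore suffices to establish the identity for a single representative of each conjugacy class, which I would take to be the diagonal element with $\hat n=\hat e_3$.

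For this representative, both $\d h=\exp(2\theta\,\d\tau_3)$ and $\d\tau_3$ are diagonal, with entries $e^{2\i m\theta}$ and $\i m$ respectively, where $m\in\{-j,\dots,j\}$ and $d=2j+1$. Summing the resulting geometric series reproduces the standard $\SU(2)$ character $\tr_{(d)}(\d h)=\sum_m e^{2\i m\theta}=\sin(d\theta)/\sin\theta$, and the trace we actually want then follows by differentiation,
\be
\tr_{(d)}\big(\d h\,\d\tau_3\big)=\sum_{m=-j}^j\i m\,e^{2\i m\theta}=\f{1}{2}\f{\partial}{\partial\theta}\sum_{m=-j}^je^{2\i m\theta}=\f{1}{2}\f{\partial}{\partial\theta}\left(\f{\sin(d\theta)}{\sin\theta}\right).
\ee
This is exactly the scalar coefficient appearing in the proposition, evaluated on the axis $\hat e_3$.

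To close the argument I would compute the fundamental trace on the same representative. From $h=\cos\theta\,\openone+2\sin\theta\,\hat n^i\tau_i$ and the normalization $\tr(\tau_j\tau_i)=-\delta_{ij}/2$ of appendix \ref{appendixA}, one gets $\tr(h\tau_i)=-\sin\theta\,\hat n^i$, so that on the axis $\hat e_3$ we have $\tr(h\tau_3)=-\sin\theta$ and hence $1=\hat n^3=-\tr(h\tau_3)/\sin\theta$. Substituting this into the previous display yields \eqref{trace formula} for $i=3$, and the covariance established in the first step immediately promotes it to arbitrary $i$ and arbitrary axis. The only step requiring genuine care is this covariance reduction: I must verify that conjugation acts identically on the two sides, so that the scalar prefactors $\tfrac12\partial_\theta(\sin(d\theta)/\sin\theta)$ and $-\sin\theta$ are untouched, and I would handle the exceptional angles $\theta\in\{0,\pi\}$, where $\sin\theta=0$, by continuity, using that the character and its derivative are polynomials in $\cos\theta$ (as recalled in appendix \ref{appendixB}). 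Everything else reduces to the routine summation of a geometric series.
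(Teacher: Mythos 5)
Your proof is correct, but it takes a genuinely different route from the paper's. The paper proves \eqref{trace formula} in one stroke, for arbitrary $h$ and arbitrary $i$, by writing $\tr_{(d)}\left(\d h\d\tau_i\right)=\partial_s\tr_{(d)}\left(\d h\exp(s\d\tau_i)\right)\big|_{s=0}$, recognizing the quantity being differentiated as the character $\sin\big(d\theta(s)\big)/\sin\theta(s)$ of the deformed group element, and then applying the chain rule with $\theta(s)=\arccos\big(\tfrac{1}{2}\tr(h\exp(s\tau_i))\big)$; the factor $-\tr(h\tau_i)/(2\sin\theta)$ is nothing but $\partial_s\theta(s)|_{s=0}$. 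You instead diagonalize: you reduce to the representative $h=\exp(2\theta\tau_3)$ by conjugation equivariance, compute the weight sum $\sum_m\i m\,e^{2\i m\theta}$ explicitly, and match it against $\tr(h\tau_3)=-\sin\theta$. Both arguments ultimately rest on the same character identity $\sum_m e^{2\i m\theta}=\sin(d\theta)/\sin\theta$; the paper's version buys you a manifestly basis-independent, two-line computation with no case analysis at $\theta\in\{0,\pi\}$, while yours makes the mechanism more concrete and, as a by-product, produces the diagonal matrix elements $(\d h)_{mn}=e^{2\i m\theta}\delta_{mn}$ and $(\d\tau_3)_{mn}=\i m\delta_{mn}$ that the paper reuses when evaluating the inverse-volume traces. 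One point you should make explicit in the reduction step: verifying the identity at the representative for $i=3$ alone is not enough to conclude, since the equivariance argument transports the full vector equation, so you also need the $i=1,2$ components at $h_0=\exp(2\theta\tau_3)$. These vanish on both sides, either by direct inspection (the generators $\d\tau_1,\d\tau_2$ have no diagonal entries, and $\hat n^1=\hat n^2=0$) or because $h_0$ is fixed by conjugation with rotations about the third axis; this is a one-line check rather than a gap, but it belongs in the argument.
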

\begin{proof}
First, it is clear that we have
\be\label{proof 1}
\tr_{(d)}\left(\d h\d\tau_i\right)=\left.\f{\partial}{\partial s}\tr_{(d)}\left(\d h\exp\left(s\d\tau_i\right)\right)\right|_{s=0}.
\ee
Then, the trace on the right-hand side, which is just the trace of an $\SU(2)$ group element, can be expressed using the character formula
\be\label{character formula}
\tr_{(d)}\left(\d h\exp\left(s\d\tau_i\right)\right)=\chi_{(d)}\big(\theta(s)\big)=\f{\sin\big(d\theta(s)\big)}{\sin\theta(s)}.
\ee
Now, the class angle $\theta(s)$ can be computed in the fundamental representation, where we have
\be
\tr\big(h\exp(s\tau_i)\big)=\f{\sin\big(2\theta(s)\big)}{\sin\theta(s)}=2\cos\theta(s),
\ee
and therefore
\be
\theta(s)=\arccos\left(\f{1}{2}\tr\big(h\exp(s\tau_i)\big)\right).
\ee
Using the chain rule, we can now rewrite \eqref{proof 1} as
\be
\tr_{(d)}\left(\d h\d\tau_i\right)=\f{\partial}{\partial\theta}\left(\f{\sin(d\theta)}{\sin\theta}\right)\left.\f{\partial\theta(s)}{\partial s}\right|_{s=0},
\ee
where $\theta=\theta(0)$. Finally, we can use the fact that
\be
\f{\partial\theta(s)}{\partial s}=\f{\partial}{\partial s}\arccos\left(\f{1}{2}\tr\big(h\exp(s\tau_i)\big)\right)=-\f{1}{2\sin\theta(s)}\f{\partial}{\partial s}\tr\big(h\exp(s\tau_i)\big)
\ee
to obtain formula \eqref{trace formula}.
\end{proof}

\section{Polynomial form of the curvature operator}
\label{appendixB}

\noindent In this appendix, we give a closed formula for the spin-dependent term appearing in the regularization of the curvature with arbitrary spin. First, in order to treat the two regularization schemes at once, let us denote the class angle by
\be\label{theta and m}
\theta=\arccos(\mathbf{x}),
\ee
where
\be
\mathbf{x}=
\left\{\begin{array}{l}
\displaystyle-1\leq\cos(\mb c)+\f{1}{2}\sin^2(\mb c)\leq1\phantom{0\leq\cos^2(\mb c)\leq1}\hspace{-2cm}\text{for the scheme HR},\\[8pt]
\displaystyle0\leq\cos^2(\mb c)\leq1\vphantom{\f{1}{2}}\phantom{-1\leq\cos(\mu c)+\f{1}{2}\sin^2(\mb c)\leq1}\hspace{-2cm}\text{for the scheme CR}.
\end{array}\right.
\ee
The function of interest is
\be
\mathcal{F}_d(\theta)\coloneqq-\f{3}{d(d^2-1)}\f{1}{\sin\theta}\f{\partial}{\partial\theta}\left(\f{\sin(d\theta)}{\sin\theta}\right)=\f{3}{d(d^2-1)}\f{1}{\sin^2\theta}\big(\cot\theta\sin(d\theta)-d\cos(d\theta)\big).
\ee
Now, because of \eqref{theta and m}, we have that
\be
\sin^2\theta=1-\mathbf{x}^2,\q\q\cos(d\theta)=T_d(\mathbf{x}),\q\q\f{\sin(d\theta)}{\sin\theta}=U_{d-1}(\mathbf{x}),
\ee
where $T_d(\mathbf{x})$ and $U_d(\mathbf{x})$ are the Chebyshev polynomials (in the variable $\mathbf{x}$) of first and second kind respectively. With this, we get that
\be\label{F intermediate}
\widetilde{\mathcal{F}}_d(\mathbf{x})\coloneqq\mathcal{F}_d\big(\theta(\mathbf{x})\big)
&=\f{3}{d(d^2-1)}\f{1}{\mathbf{x}^2-1}\big(dT_d(\mathbf{x})-\mathbf{x}U_{d-1}(\mathbf{x})\big)\nn\\
&=\f{3}{d(d^2-1)}\f{1}{\mathbf{x}^2-1}\big((d+1)T_d(\mathbf{x})-U_d(\mathbf{x})\big),
\ee
where we have used the recursion relation
\be
\mathbf{x}U_{d-1}(\mathbf{x})=U_d(\mathbf{x})-T_d(\mathbf{x}).
\ee
Now, the last step is to show that the division by $\mathbf{x}^2-1$ on the right-hand side of \eqref{F intermediate} can actually be carried out, implying that $\widetilde{\mathcal{F}}_d(\mathbf{x})$ itself is polynomial in $\mathbf{x}$. For this, we make use of one of the explicit expressions for the Chebyshev polynomials. This is given by
\be
T_d(\mathbf{x})=\sum_{i=0}^{\lfloor d/2\rfloor}{{d}\choose{2i}}(\mathbf{x}^2-1)^i\mathbf{x}^{d-2i},\q\q U_d(\mathbf{x})=\sum_{i=0}^{\lfloor d/2\rfloor}{{d+1}\choose{2i+1}}(\mathbf{x}^2-1)^i\mathbf{x}^{d-2i},
\ee
where $\lfloor\cdot\rfloor$ denotes the floor function. After a simple manipulation of the binomial coefficients, we get that
\be
(d+1)T_d(\mathbf{x})-U_d(\mathbf{x})=\sum_{i=0}^{\lfloor d/2\rfloor}2i{{d+1}\choose{2i+1}}(\mathbf{x}^2-1)^i\mathbf{x}^{d-2i},
\ee
which implies that
\be\label{x polynomial}
\widetilde{\mathcal{F}}_d(\mathbf{x})=\f{3}{d(d^2-1)}\sum_{i=1}^{\lfloor d/2\rfloor}2i{{d+1}\choose{2i+1}}(\mathbf{x}^2-1)^{i-1}\mathbf{x}^{d-2i},
\ee
and therefore also
\be\label{theta polynomial}
\mathcal{F}_d(\theta)=\f{3}{d(d^2-1)}\sum_{i=1}^{\lfloor d/2\rfloor}2i{{d+1}\choose{2i+1}}(-1)^{i-1}(\sin^2\theta)^{i-1}(\cos\theta)^{d-2i}.
\ee
The expression \eqref{x polynomial} is therefore always a polynomial in $\cos(\mb c)$ and/or $\sin(\mb c)$ (or equivalently in $\cos(\lambda b)$ and/or $\sin(\lambda b)$), depending on the regularization scheme.

Now, some standard properties of the Chebyshev polynomials can be used in order to obtain a bound on the function $\mathcal{F}_d(\theta)$. For $-1\leq\mathbf{x}\leq 1$ or $0\leq\theta\leq\pi$, we have
\be
\big|T_d(\mathbf{x})\big|=\big|T_d(\cos\theta)\big|\leq 1,\q\q\big|U_d(\mathbf{x})\big|=\big|U_d(\cos\theta)\big|\leq d+1.
\ee
Therefore, from \eqref{F intermediate} we can write the (crude but nonetheless useful) bound
\be\label{bound on F}
\big|\mathcal{F}_d(\theta)\big|\leq\f{6}{d(d-1)}\f{1}{\sin^2\theta},
\ee
where the right-hand side is a decreasing function of $\mathbb{N}\ni d\geq2$.

\end{document}